\newcommand{\xvec}{\bm x}
\newcommand{\yvec}{\bm y}
\newcommand{\Xmat}{\bm X}
\newcommand{\Ymat}{\bm Y}
\newcommand{\fvec}{\bm f}
\newcommand{\gvec}{\bm g}
\newcommand{\hvec}{\bm h}
\newcommand{\epvec}{\bm \varepsilon}
\newcommand{\amat}{\bm \alpha}
\newcommand{\reals}{\mathbb{R}}
\newcommand{\ind}{\perp\!\!\!\!\perp}
\newcommand{\iid}{\overset{i.i.d.}{\sim}}
\newcommand{\Eset}{\mathcal{E}}
\newcommand{\envlp}{\Eset_{\bm \Sigma} (\fvec) }
\newcommand{\Gmat}{\bm \Gamma}
\newcommand{\Omat}{\bm \Omega}
\newcommand{\Sigmat}{\bm \Sigma}
\newcommand{\hatenvlp}{\widehat{\Eset}_{\bm \Sigma}^{\lambda} (\fvec)  }
\newcommand{\hatGmat}{\widehat{\Gmat}^{\lambda}}
\newcommand{\norm}[1]{\lVert #1 \rVert}
\newcommand{\inprod}[2]{\langle #1, #2 \rangle}
\newcommand{\RKHS}{\mathcal{H}_{K}}
\DeclareMathOperator{\spann}{span}
\DeclareMathOperator{\E}{E}
\DeclareMathOperator{\Cov}{Cov}
\DeclareMathOperator{\Var}{Var}
\DeclareMathOperator{\dom}{dom}
\DeclareMathOperator*{\argmax}{arg\,max}
\DeclareMathOperator*{\argmin}{arg\,min}
\DeclareMathOperator{\tr}{tr}
\DeclareMathOperator{\diag}{diag}
\DeclareMathOperator{\vect}{vec}
\newtheorem{theorem}{Theorem}[section]
\newtheorem{prop}[theorem]{Proposition}
\theoremstyle{definition}
\newtheorem{definition}[theorem]{Definition}
\def\spacingset#1{\renewcommand{\baselinestretch}%
	{#1}\small\normalsize} 
\begin{document}
	\spacingset{1}
	
	\title{Nonparametric Envelopes for Flexible Response Reduction}
	\author{Tate Jacobson%
		\hspace{.2cm}\\
		Department of Statistics, Oregon State University}
	\maketitle

	\begin{abstract}
		Envelope methods improve the estimation efficiency in multivariate linear regression by identifying and separating the material and immaterial parts of the responses or the predictors and estimating the regression coefficients using only the material part. 
		Though envelopes have been extended to other models, such as GLMs, these extensions still largely fall under the restrictive parametric modeling framework. 
		In this paper, we introduce a flexible, nonparametric extension of response envelopes for improving efficiency in nonlinear multivariate regressions. 
		We propose the \textit{kernel envelope} (KENV) estimator for simultaneously estimating the response envelope subspace and the enveloped nonparametric conditional mean function in a reproducing kernel Hilbert space, with a novel penalty that accounts for the envelope structure. 
		We prove that the prediction risk for KENV converges to the optimal risk as the sample size diverges and show that KENV achieves a lower in-sample prediction risk than kernel ridge regression when the response has a non-trivial immaterial component. 
		We compare the prediction performance of KENV with other envelope methods and kernel regression methods in simulations and a real data example, finding that KENV delivers more accurate predictions than both the envelope-based and kernel-based alternatives in both low and high dimensions.
	\end{abstract}

	\noindent%
	{\it Keywords:} Dimension reduction, high-dimensional, multivariate regression, nonlinear regression, reproducing kernel Hilbert space
	\vfill
	
	\newpage

	\spacingset{1.7}
	\section{Introduction}
	Envelope methods for multivariate regression perform targeted dimension reduction, seeking to identify and separate the material and immaterial parts of the response $\yvec$ \citep{Cook2010} or the predictors $\xvec$ \citep{Cook2013} in order to improve the efficiency in estimating parameters of interest.
	Under a response envelope model, for example, there exists an immaterial part of $\yvec$ which does not depend on $\xvec$.
	By discarding the immaterial part and regressing only the material part on $\xvec$, the response envelope estimator improves the efficiency of the regression coefficient estimates. %
	As envelope methodology has been extended beyond standard linear regression---with recent extensions to generalized linear models and Cox regression \citep{Cook_foundations_2015}, reduced rank regression \citep{cook_reduced_rank_2015}, quantile regression \citep{Ding2021}, and Huber regression \citep{zhou_enveloped_2024}---improving the estimation efficiency has remained the central motivation.

	Despite this proliferation of envelope methods, few fall outside the restrictive parametric modeling framework.
	While there is a sizable literature on nonparametric and kernel-based sufficient dimension reduction (SDR) \citep{fukumizu_kernel_2009, yeh_nonlinear_2009, lee_general_2013, fukumizu_gradient-based_2014, liu_sparse_2024} and some recent work on semiparametric predictor envelopes \citep{ma_semiparametrically_2024}, this research only addresses dimension reduction for predictors, not responses. 
	In recent work which covers both predictor and response reduction, \citet{zhang_envelopes_2020} introduced the martingale difference divergence envelope and the central mean envelope, model-free extensions of envelopes which allow for heteroscedasticity and a nonlinear conditional mean function.
	Due to their model-free nature, however, these extensions of envelopes provide estimates of their targeted subspaces without corresponding estimates of the unknown conditional mean function, and their theoretical results concern the subspace estimation accuracy rather than prediction or estimation of the regression function.
	Recently, \citet{kwon_enhanced_2024} introduced the enhanced envelope estimator, a regularized version of response envelopes for multivariate regressions with high-dimensional predictors.
	While their enhanced envelope extends response envelope methodology to high dimensions, it still assumes that the underlying model is linear. 

	In this paper, our focus lies on improving efficiency in nonlinear multivariate regressions via envelopes. 
	To this end, we introduce a flexible, nonparametric extension of response envelope methodology.
	Under our nonparametric envelope model, the relationship between $\yvec$ and $\xvec$ takes the general form
	$
		\yvec = \fvec(\xvec) + \epvec
	$
	where $\fvec$ is an unknown measurable function and $\yvec$ can be decomposed into material and immaterial parts, of which only the former depends of $\xvec$.
	We propose the \textit{kernel envelope} (KENV) estimator for simultaneously estimating the response envelope and the unknown regression function $\fvec$ in a reproducing kernel Hilbert space (RKHS), featuring a novel penalty which accounts for the envelope structure of the response.
	KENV improves on existing multivariate kernel regression methods by leveraging the envelope structure of the response to increase the efficiency in estimating the regression function.
	At the same time, KENV provides a flexible nonparametric extension of envelope methodology---in addition to providing the ability to estimate complex nonlinear regression functions, our RKHS-based approach also naturally accommodates high-dimensional predictors.

	We highlight a few key contributions of this work:
	First and foremost, we introduce KENV, which is, to our knowledge, the first envelope estimator to directly and simultaneously estimate the envelope subspace and the corresponding enveloped nonparametric conditional mean function for $\yvec$.
	We describe how KENV is estimated in an RKHS and detail how we efficiently compute the solution path along a sequence of penalty parameter values.   
	We develop kernel learning theory for KENV, proving that its prediction risk converges to the optimal risk as the sample size diverges to infinity.
	This result holds under much weaker assumptions about the relationship between $\yvec$ and $\xvec$ than the strict linearity assumption of parametric envelope methods, and makes KENV a reliable option for envelope regression when linearity does not hold.
	In addition, we show that KENV achieves a lower in-sample prediction risk than multivariate kernel ridge regression if the response has a non-trivial immaterial component.
	This improvement stems from a decrease in the variance due to the response reduction, aligning with our intuition for how envelopes can enhance nonparametric regression methods.
	
	We demonstrate KENV's superior prediction performance relative to existing envelope methods and kernel-based estimators in both simulations and a real data example.
	In particular, we observe that KENV consistently delivers more accurate predictions than kernel ridge regression and support vector machines while the parametric enhanced envelope estimator \citep{kwon_enhanced_2024} can do worse than ridge regression, even when the true model is an envelope.
	This result reveals the sensitivity of a linear envelope method to model misspecification and underscores the value of the nonparametric envelope model.
	
	\subsection*{Outline}
	The rest of the paper is organized as follows.
	In Section \ref{sec:nonparametric envelopes}, we introduce the nonparametric envelope model.
	In Section \ref{sec:envelope in RKHS}, we propose the KENV estimator in RKHS and describe its computation.
	In Section \ref{sec:theory}, we study the prediction risk, comparing the in-sample risks of KENV and kernel ridge regression and developing asymptotic learning theory for KENV.
	In Section \ref{sec:sims}, we conduct simulation studies comparing the prediction accuracy of KENV with that of other envelope estimators and RKHS regression methods.
	In Section \ref{sec:real data}, we apply KENV and other methods to predict the characteristics of output from a low-density polyethylene reactor. 
	Section \ref{sec:discussion} closes with some discussion.
	Theoretical proofs are given in the appendix.
	
	\subsection*{Notation}
	We use the following notation throughout the paper.
	For any integer $n$, we define $[n] = \{1, \ldots, n\}$.
	For a matrix $\bm A \in \reals^{p \times q}$, we let $\bm A^T$ denote its transpose and $\norm{\bm A}_F = \sqrt{\sum_{i,j} a_{ij}^2}  $ denote its Frobenius norm.
	We use $\bm A_{i \cdot}$ and $\bm A_{\cdot j}$ to denote the $i$th row and $j$th column, respectively.
	For a square matrix, we use $\tr\{\bm A\}$ to denote the trace and $|\bm A|$ to denote the determinant
	For a vector $\bm v \in \reals^p$, we let $\norm{\bm v}_2 = \sqrt{ \sum_{i = 1}^p v_j^2 }$ denote the $L_2$ norm.
	We use $V_n \to_p V$ to denote convergence in probability.
	
	\section{Nonparametric Response Envelopes}\label{sec:nonparametric envelopes}
	We start with a general multivariate regression model:
	\begin{equation}
		\yvec = \fvec(\xvec) + \epvec \label{eqn:regression model}
	\end{equation}
	where $\yvec \in \reals^r, \xvec \in \reals^p$, $\fvec: \reals^p \mapsto \reals^r$ is an unknown measurable function, and $\epvec$ has mean zero and positive-definite covariance matrix $\bm \Sigma$ and is independent of $\xvec$.
	To facilitate presentation of the method we assume $\E[\yvec] = \bm 0$ without loss of generality.
	As the form of $\fvec$ is unspecified, \eqref{eqn:regression model} is a nonparametric model.

	To improve the efficiency of our estimate of $\fvec$, we seek to identify the smallest subspace $\Eset \subseteq \reals^r$ satisfying the following conditions:
	\begin{enumerate}[label = (\roman*)]
		
		\item $\bm Q_{\Eset} \yvec | \fvec(\xvec) = \bm z_1 \sim  \bm Q_{\Eset} \yvec | \fvec(\xvec) = \bm z_2$ for all $\bm z_1, \bm z_2$ in the range of $\fvec$ and \label{condition i}
		\item $\bm P_{\Eset} \yvec \ind \bm Q_{\Eset} \yvec | \fvec(\xvec)$ \label{condition ii}
	\end{enumerate}
	where $\bm P_{\Eset}$ denotes the projection onto $\Eset$ and $\bm Q_{\Eset} = \bm I - \bm P_{\Eset}$.
	Condition \ref{condition i} provides that the distribution of $\bm Q_{\Eset} \yvec$ is invariant to changes in $\fvec(\xvec)$ while \ref{condition ii} states that $\bm P_{\Eset} \yvec$ and $\bm Q_{\Eset} \yvec$ are conditionally independent given $\fvec(\xvec)$.
	Together \ref{condition i} and \ref{condition ii} imply that $\bm P_{\Eset} \yvec$ contains all of the information in $\yvec$ that can be explained by $\fvec(\xvec)$.
	For this reason, we refer to $\bm P_{\Eset} \yvec$ as the \textit{material part} of $\yvec$ for the regression and $\bm Q_{\Eset} \yvec$ as the \textit{immaterial part}.
	By separating the material and immaterial parts of $\yvec$ and performing regression on only the material part, we can increase the efficiency of our estimator of $\fvec(\xvec)$.
	
	Envelopes enable us to formalize the idea of the smallest subspace satisfying \ref{condition i} and \ref{condition ii}.
	Traditional response envelopes \citep{Cook2010}, however, assume a linear relationship between $\yvec$ and $\xvec$. %
	To accommodate our more general regression model \eqref{eqn:regression model}, we propose a novel nonparametric extension of response envelopes.
	First, we need to establish some preliminaries.
	The following proposition proposes equivalent conditions to \ref{condition i} and \ref{condition ii} which we will use to motivate our definition of envelopes:
	\begin{prop} \label{prop:condition equivalence}
		Under the nonparametric model \eqref{eqn:regression model}, \ref{condition i} holds if and only if
		\begin{enumerate}[label=(a)]
			
			\item $\fvec(\xvec) \in \Eset$ for all $\xvec \in \dom(\fvec)$.%
			\label{condition a}
		\end{enumerate}
		If, in addition, $\epvec$ is normally distributed, then \ref{condition ii} holds if and only if
		\begin{enumerate}[label=(b)]
			
			\item $\Cov(\bm P_{\Eset} \yvec, \bm Q_{\Eset} \yvec | \fvec(\xvec) ) = \bm 0 $. \label{condition b}
		\end{enumerate}
	\end{prop}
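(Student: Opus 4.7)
The plan is to exploit the independence $\epvec \ind \xvec$ together with the additive structure $\yvec = \fvec(\xvec) + \epvec$ to reduce each equivalence to an elementary structural fact about shifts of a fixed distribution.

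For \ref{condition i} $\Leftrightarrow$ \ref{condition a}, I would first write $\bm Q_{\Eset} \yvec = \bm Q_{\Eset} \fvec(\xvec) + \bm Q_{\Eset} \epvec$. Conditioning on $\fvec(\xvec) = \bm z$ fixes the first summand, and since $\epvec \ind \xvec$ implies $\epvec \ind \fvec(\xvec)$, it leaves the distribution of the second summand unchanged. Thus the conditional law of $\bm Q_{\Eset} \yvec$ given $\fvec(\xvec) = \bm z$ is the law of $\bm Q_{\Eset} \epvec$ translated by $\bm Q_{\Eset} \bm z$. For the forward direction \ref{condition i} $\Rightarrow$ \ref{condition a}, condition \ref{condition i} forces this translation to be the same for every $\bm z$ in the range of $\fvec$; comparing conditional means, which exist because $\Sigmat$ is positive definite, shows that $\bm Q_{\Eset} \fvec(\xvec)$ is constant in $\xvec$, and then the normalization $\E[\fvec(\xvec)] = \bm 0$ forces that constant to be $\bm 0$, yielding $\fvec(\xvec) \in \Eset$. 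The reverse direction is immediate: if $\fvec(\xvec) \in \Eset$, then $\bm Q_{\Eset} \yvec = \bm Q_{\Eset} \epvec$ is independent of $\xvec$, so its conditional distribution does not depend on $\fvec(\xvec)$ at all.

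For \ref{condition ii} $\Leftrightarrow$ \ref{condition b} under normality, I would note that because $\epvec \sim N(\bm 0, \Sigmat)$ is independent of $\xvec$, conditional on $\fvec(\xvec) = \bm z$ the pair $(\bm P_{\Eset} \yvec, \bm Q_{\Eset} \yvec)$ equals $(\bm P_{\Eset} \bm z + \bm P_{\Eset} \epvec,\, \bm Q_{\Eset} \bm z + \bm Q_{\Eset} \epvec)$, a deterministic affine image of the Gaussian vector $\epvec$ and hence itself jointly Gaussian, with conditional cross-covariance $\bm P_{\Eset} \Sigmat \bm Q_{\Eset}$ that does not depend on $\bm z$. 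For jointly Gaussian vectors, zero cross-covariance is equivalent to independence, so \ref{condition ii} holds if and only if $\bm P_{\Eset} \Sigmat \bm Q_{\Eset} = \bm 0$, which is precisely \ref{condition b}.

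The only delicate step is the forward direction of \ref{condition i} $\Rightarrow$ \ref{condition a}, where one must rule out the possibility that two genuinely different shifts of the law of $\bm Q_{\Eset} \epvec$ accidentally coincide. Identifying the shift through the conditional mean takes care of this, after which the zero-mean convention pins the common shift down to $\bm 0$ rather than to some other vector in the complement of $\Eset$.
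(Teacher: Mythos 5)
Your proof is correct and follows essentially the same route as the paper: decompose $\bm Q_{\Eset}\yvec = \bm Q_{\Eset}\fvec(\xvec) + \bm Q_{\Eset}\epvec$ using $\epvec \ind \xvec$ for the first equivalence, and invoke the zero-covariance/independence equivalence for jointly Gaussian vectors for the second. You are in fact slightly more careful than the paper on the forward direction of \ref{condition i} $\Rightarrow$ \ref{condition a}, where the paper asserts $\bm Q_{\Eset}\fvec(\xvec) = \bm 0$ directly while you correctly note that \ref{condition i} only yields constancy of $\bm Q_{\Eset}\fvec(\xvec)$ and that the centering convention $\E[\yvec]=\bm 0$ is needed to pin the constant to zero.
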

	
	Condition \ref{condition b} enables us to decompose $\bm \Sigma$ as $\bm \Sigma = \bm P_{\Eset} \bm \Sigma \bm P_{\Eset} + \bm Q_{\Eset} \bm \Sigma \bm Q_{\Eset}$.
	We generalize this property with the notion of a \textit{reducing subspace}:
	\begin{definition}
		We say that a subspace $\mathcal{R} \subseteq \reals^r$ is a \emph{reducing subspace} of a symmetric matrix $\bm M \in \reals^{r\times r}$ if it decomposes $\bm M$ as
		$
		\bm M = \bm P_{\mathcal R} \bm M \bm P_{\mathcal R} + \bm Q_{\mathcal R} \bm M \bm Q_{\mathcal R}.
		$
	\end{definition}
	\noindent Using these terms, we define the nonparametric response envelope as follows:
	\begin{definition}
		Let $\bm M \in \reals^{r \times r}$ be a positive definite matrix and $\fvec: \reals^p \mapsto \reals^r$.
		The $\bm M$-envelope of $\fvec$, denoted as $\Eset_{\bm M}(\fvec)$, is the intersection of all reducing subspaces of $\bm M$ such that  $\fvec(\xvec) \in \Eset$ for all $\xvec \in \dom(\fvec)$.
	\end{definition}
	Proposition \ref{prop:condition equivalence} provides that the $\bm \Sigma$-envelope of $\fvec$ is the smallest subspace satisfying conditions \ref{condition i} and \ref{condition ii}.
	While we have used these conditions to motivate the development of the nonparametric response envelope, the definition of the envelope does not depend on them.
	In particular, we can drop the assumption that $\epvec$ is normally distributed and still have that $\Eset_{\bm M}(\fvec)$ satisfies \ref{condition i} and \ref{condition b}, which is sufficient for our theoretical study.

	Let $u = \dim\{ \envlp \}$ denote the dimension of the envelope, $\Gmat \in \reals^{r\times u}$ be an orthogonal basis matrix for $\envlp$, and $\Gmat_0 \in \reals^{r\times (r-u)}$ be an orthogonal basis matrix for the complement of $\envlp$.
	As a consequence of condition \ref{condition a}, we know that there exists a function $\gvec: \reals^p \mapsto \reals^{u}$ such that $\fvec(\xvec) = \Gmat \gvec(\xvec)$ for all $\xvec \in \dom(\fvec)$.
	Under the nonparametric envelope model we can rewrite \eqref{eqn:regression model}  as
	\begin{equation}
		\yvec = \Gmat \gvec(\xvec)+\epvec \text{, \;\;} \bm \Sigma = \Gmat \Omat \Gmat^T + \Gmat_0 \Omat_0 \Gmat_0^T  \label{eqn:envelope model}
	\end{equation}
	where $\Omat \in \reals^{u\times u}$ and $\Omat_0 \in \reals^{(r-u)\times(r-u)}$ are positive-definite matrices.
	Under the envelope model, we see that the material part of $\yvec$ satisfies $\Gmat^T \yvec = \gvec(\xvec) + \Gmat^T \epvec $, with $\Var (\Gmat^T \epvec) = \Omat$, and the immaterial part follows the zero-mean model $\Gmat_0^T \yvec = \Gmat_0^T\epvec$, with $\Var(\Gmat_0^T\epvec) = \Omat_0$.
	If we knew a basis $\Gmat$ for $\envlp$ in advance, we could estimate $\gvec$ using only the material part $\Gmat^T \yvec$, removing the noise from the immaterial part $\Gmat_0^T \yvec$ and thereby decreasing the variance.
	In practice, however, $\envlp$ is unknown and must be estimated.
	
	\section{Estimation in a Reproducing Kernel Hilbert Space}\label{sec:envelope in RKHS}
	Suppose we observe a sample $\{(\yvec_i, \xvec_i)\}_{i=1}^n$ of i.i.d. realizations of $(\yvec, \xvec)$. Let $\Ymat = [\yvec_1 \cdots \yvec_n]^T$ and $\Xmat = [\xvec_1 \cdots \xvec_n]^T$ denote the observed response and predictor matrices.
	If $\yvec$ is normally distributed given $\xvec$ and the true envelope dimension $u$ is known, then the log-likelihood under the nonparametric response envelope model \eqref{eqn:envelope model} is given by
	\begin{align}
		\ell_u(\Eset, \gvec, \Omat, \Omat_0) = & -(nr/2) \log (2 \pi) - (n/2)\log| \Gmat \Omat \Gmat^T + \Gmat_0 \Omat_0 \Gmat_0^T | \notag\\
		& - (1/2) \sum_{i=1}^n (\yvec_i - \Gmat \gvec(\xvec_i))^T(\Gmat \Omat \Gmat^T + \Gmat_0 \Omat_0 \Gmat_0^T )^{-1} (\yvec_i - \Gmat \gvec(\xvec_i)) .\label{eqn:log likelihood}
	\end{align}
	
	In the traditional linear response envelope model, where $\gvec(\xvec) = \bm \nu \xvec$ for some $\bm \nu \in \reals^{u \times p}$, the envelope estimator is obtained by maximizing the log-likelihood.
	For the nonparametric envelope model \eqref{eqn:envelope model}, however, we need to impose some regularity on the regression function $\gvec$ to avoid obtaining an estimator that simply interpolates the training data.
	To that end, we assume that the component functions $g_j$ of $\gvec = (g_1,\ldots, g_u)^T$ belong to $\RKHS$, the reproducing kernel Hilbert space (RKHS) generated by the positive definite kernel $K(\cdot, \cdot):\mathcal X \times \mathcal{X} \mapsto \reals$,	where $\mathcal{X} \in \reals^p$ denotes the domain of $\xvec$.
	Estimation in an RKHS can capture complex nonlinear dependencies while remaining computationally tractable \citep{wahba_spline_1990}, making these methods suitable for diverse regression and classification problems (see \citet{hastie_elements_2009} for a thorough introduction).

	The choice of kernel determines what kinds of functions belong to the RKHS.
	Well-known examples of kernels include
	\begin{itemize}
		\item the Gaussian kernel: $K(\xvec, \xvec') = \exp(- \norm{\xvec - \xvec'}_2^2/\sigma^2)$
		\item the polynomial kernel: $K(\xvec, \xvec')= (\xvec^T\xvec' + 1)^\sigma$
		\item the exponential kernel: $K(\xvec, \xvec') = \exp(\xvec^T\xvec')$
	\end{itemize}
	where $\sigma$ is a kernel hyperparameter.
	For some kernels, such as the Gaussian and exponential kernels, $\RKHS$ is dense in the space of continuous functions on $\mathcal{X}$ \citep{steinwart_influence_2001}, meaning that for any continuous function $f:\mathcal{X} \to \reals$ there exists a function $h \in \RKHS$ that is arbitrarily close to $f$.
	Other kernels may be chosen based on characteristics of the data, such as the periodic kernel for estimating a periodic function.
	In addition to capturing nonlinear relationships, estimation in an RKHS naturally accommodates high-dimensional predictors, further enhancing its flexibility.

	By Mercer's theorem, the kernel admits the expansion
	$K(\xvec, \xvec') = \sum_{i=1}^{\infty} \gamma_i\phi_i(\xvec) \phi_i(\xvec')$ where $\gamma_i\geq0$ for all $i$ and $\sum_{i=1}^\infty \gamma_i^2 < \infty$.
	Any function $h \in \RKHS$ can be expressed as a sum of the eigen-functions $h(\xvec) = \sum_{i=1}^{\infty} c_i \phi_i(\xvec)$, with squared norm given by $\norm{h}_{\RKHS}^2 = \sum_{i=1}^{\infty} c_i^2/\gamma_i < \infty$.
	The inner product in $\RKHS$ is given as $\inprod{\sum_{i=1}^{\infty} c_i \phi_i(\xvec)}{ \sum_{j=1}^{\infty} d_j \phi_j(\xvec) }_{\RKHS} = \sum_{i = 1}^\infty c_i d_i/\gamma_i $.
	For $\bm h = (h_1,\ldots, h_u)^T$ with each $h_i \in \RKHS$, we define the multivariate penalty functional to be
	$
		\rho (\bm h) = \sum_{i=1}^u \norm{h_i}_{\RKHS}^2.
	$
	
	We propose a novel penalty which accounts for the envelope structure of the response, given by
	\begin{equation}
		\rho(\gvec, \Omat) =\rho(\Omat^{-1/2} \gvec).
	\end{equation}
	We see that the penalty applies regularity to $\gvec$ after standardizing the material part of the regression to have uncorrelated errors, as
	$\Omat^{-1/2}\Gmat^T \yvec = \Omat^{-1/2}\gvec(\xvec) + \Omat^{-1/2}\Gmat^T \epvec $ and $\Var(\Omat^{-1/2}\Gmat^T \epvec) = \bm I$.
	In this way, it only penalizes the regression function for the material part, using the envelope structure to inform the regularization.

	We estimate the response envelope and enveloped regression function by maximizing the penalized log-likelihood, solving
	\begin{equation}
		\argmax_{\Eset, \gvec, \Omat, \Omat_0} \left\{ \ell_u(\Eset, \gvec, \Omat, \Omat_0)  - (\lambda/2) \rho(\gvec, \Omat)\right\} \label{eqn:objective in terms of g}
	\end{equation}
	where $\lambda > 0$ is a tuning parameter.
	While we have assumed $\epvec$ is normally distributed to motivate the derivation of this objective, we do not rely on normality for our theoretical results and thus treat \eqref{eqn:objective in terms of g} as a general-purpose objective for the remainder of this study.
	Through some elementary calculations, we can break the likelihood-based term into components for the material and immaterial parts of $\yvec$, as
	\begin{align*}
		\ell_u(\Eset, \gvec, \Omat, \Omat_0) & = -(nr/2) \log (2 \pi) - (n/2)\log| \Omat_0 | - (1/2) \sum_{i=1}^n \yvec_i^T\Gmat_0 \Omat_0^{-1} \Gmat_0^T \yvec_i \\
		& \hspace{0.5cm} - (n/2)\log| \Omat | - (1/2) \sum_{i=1}^n (\Gmat^T\yvec_i - \gvec(\xvec_i))^T \Omat^{-1} (\Gmat^T \yvec_i - \gvec(\xvec_i))\\
		& = -(nr/2) \log (2 \pi) + L_1( \Eset, \Omat_0) + L_2(\Eset, \gvec, \Omat).
	\end{align*}
	
	Suppose that we hold $\Gmat$ fixed. Partially maximizing $L_2(\Eset, \gvec, \Omat) - (\lambda/2) \rho(\gvec, \Omat)$ with respect to $\gvec$ is an optimization problem over an infinite-dimensional space.
	The following \textit{representer theorem}, however, provides that the solution to \eqref{eqn:objective in terms of g} has a finite-dimensional representation in terms of the kernel functions $\{K(\cdot, \xvec_i)\}_{i \in [n]}$:
	
	\begin{theorem}\label{thm:rep theorem}
		The solution to 
		$\argmax_{\gvec} \{ L_2(\Eset, \gvec, \Omat) - (\lambda/2) \rho(\gvec, \Omat) \}$
		can be represented as
		\begin{equation}
			g_j(\xvec) = \sum_{i=1}^n \alpha_{ij} K(\xvec, \xvec_i) \text{\;\; for \;\;} j \in [u] \text{,\;\;} \alpha_{ij} \in \reals. \label{eqn:rep theorem}
		\end{equation}
	\end{theorem}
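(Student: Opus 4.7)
The plan is to adapt the classical representer theorem argument of Kimeldorf and Wahba, but to apply it after a change of variables that absorbs the material covariance $\Omat$ into the regression function. Concretely, I would reparametrize $\tilde{\gvec} = \Omat^{-1/2}\gvec$ so that $\gvec = \Omat^{1/2}\tilde{\gvec}$ and note that each component $\tilde{g}_j$ is a fixed linear combination of the components of $\gvec$, hence lies in $\RKHS$ whenever each $g_j$ does (and vice versa). In these new coordinates the penalty simplifies to $\rho(\gvec,\Omat) = \sum_{j=1}^u \norm{\tilde g_j}_{\RKHS}^2$, while the likelihood term $L_2$ depends on $\tilde{\gvec}$ only through the evaluation vectors $\tilde{\gvec}(\xvec_i) = \Omat^{-1/2}\gvec(\xvec_i)$.

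Next I would perform the usual orthogonal decomposition inside $\RKHS$. Let $\mathcal{S} = \spann\{K(\cdot,\xvec_i)\}_{i \in [n]}$ and write $\tilde g_j = \tilde g_j^{\parallel} + \tilde g_j^{\perp}$ with $\tilde g_j^{\parallel} \in \mathcal{S}$ and $\tilde g_j^{\perp}$ orthogonal to $\mathcal{S}$ in $\RKHS$. By the reproducing property, $\tilde g_j^{\perp}(\xvec_i) = \inprod{\tilde g_j^{\perp}}{K(\cdot,\xvec_i)}_{\RKHS} = 0$ for every $i \in [n]$, so $\tilde{\gvec}(\xvec_i) = \tilde{\gvec}^{\parallel}(\xvec_i)$ and therefore $L_2$ takes the same value at $\tilde{\gvec}$ and at $\tilde{\gvec}^{\parallel}$. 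On the other hand, Pythagoras gives $\norm{\tilde g_j}_{\RKHS}^2 = \norm{\tilde g_j^{\parallel}}_{\RKHS}^2 + \norm{\tilde g_j^{\perp}}_{\RKHS}^2$, so the penalty is strictly decreased (unless $\tilde g_j^{\perp} \equiv 0$) by replacing $\tilde{\gvec}$ with $\tilde{\gvec}^{\parallel}$. Since $\lambda > 0$, any maximizer must satisfy $\tilde g_j^{\perp} = 0$, i.e.\ each $\tilde g_j$ admits a representation $\tilde g_j(\xvec) = \sum_{i=1}^n \tilde\alpha_{ij} K(\xvec,\xvec_i)$.

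Finally, I would translate this back to $\gvec$. Setting $\amat = \tilde{\amat}\,(\Omat^{1/2})^T$ componentwise, so that $\alpha_{ij} = \sum_{k=1}^u (\Omat^{1/2})_{jk}\tilde\alpha_{ik}$, gives $g_j(\xvec) = \sum_{k=1}^u (\Omat^{1/2})_{jk} \tilde g_k(\xvec) = \sum_{i=1}^n \alpha_{ij} K(\xvec,\xvec_i)$, which is exactly \eqref{eqn:rep theorem}. The only subtle point in the argument is the reparametrization step: because $\Omat$ mixes the $u$ coordinates of $\gvec$, one cannot directly decompose each $g_j$ componentwise in $\RKHS$ and control the cross-coordinate penalty $\rho(\gvec,\Omat)$; moving to $\tilde{\gvec}$ diagonalizes the penalty and makes the classical Hilbert-space orthogonality argument go through cleanly. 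Everything else is bookkeeping.
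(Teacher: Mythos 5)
Your proof is correct and follows essentially the same orthogonal-decomposition argument as the paper: project onto $\spann\{K(\cdot,\xvec_i)\}_{i\in[n]}$, use the reproducing property to show the likelihood term $L_2$ is unchanged, and use Pythagoras to show the penalty can only decrease, so the maximizer lies in the span (your translation back from $\tilde{\gvec}$ to $\gvec$ via $\Omat^{1/2}$ is also handled correctly, since the span is closed under linear combinations). One small correction to your closing remark: the paper \emph{does} decompose each $g_j$ componentwise without the $\Omat^{-1/2}$ reparametrization, and controls the cross-coordinate penalty directly by observing that every cross term $\inprod{[\Omat^{-1/2}]_{i\cdot}\gvec^K}{[\Omat^{-1/2}]_{i\cdot}\gvec^{\perp}}_{\RKHS}$ vanishes because each component of $\gvec^{\perp}$ is orthogonal to the span that contains every component of $\gvec^K$; your diagonalization is a clean alternative but is not actually necessary.
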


	Let $K(\xvec) = [K(\xvec, \xvec_1), \ldots, K(\xvec, \xvec_n)] \in \reals ^{1 \times n}$ and $\amat = [\alpha_{ij}] \in \reals^{u \times n}$.
	By the representer theorem, the partial maximizer can be written as
	$ \gvec(\xvec) =  \amat K(\xvec)^T$.
	Let 
	$\bm K = 
	\left[\begin{smallmatrix} K(\xvec_1, \xvec_1) & \cdots & K(\xvec_1, \xvec_n)\\
		\vdots & \ddots & \vdots\\
		K(\xvec_n, \xvec_1) & \cdots & K(\xvec_n, \xvec_n)
	\end{smallmatrix}\right] 
	= \left[\begin{smallmatrix} K(\xvec_1)\\
		\vdots\\
		 K(\xvec_n)
	\end{smallmatrix}\right]$, the Gram matrix.
	Using this representation, we see that $\norm{[\Omat^{-1/2}]_{i\cdot} \gvec}_{\RKHS}^2 = [\Omat^{-1/2}]_{i\cdot}\amat \bm K \amat^T[\Omat^{-1/2}]_{\cdot i}$ for each $i \in [u]$ . As such, we can write the penalty in terms of $\amat$ as $\rho(\amat, \Omat) = \tr\{\Omat^{-1/2}\amat \bm K \amat^T\Omat^{-1/2}\} = \tr\{\amat \bm K \amat^T\Omat^{-1}\}$.
	We can similarly express the likelihood-based term for the material part as a function of $\amat$, writing
	$
		L_2(\Eset, \amat, \Omat) = - (n/2)\log| \Omat | - (1/2) \sum_{i=1}^n (\Gmat^T\yvec_i - \amat K(\xvec_i)^T)^T \Omat^{-1} (\Gmat^T \yvec_i - \amat K(\xvec_i)^T)
	$.
	
	It is straightforward to show that for a fixed $\Gmat$ the solution to $\argmax_{\amat} L_2(\Eset, \amat, \Omat) - (\lambda/2 )  \tr\{\amat \bm K \amat^T\Omat^{-1}\}$
	has a closed form: $\amat = \Gmat^T \Ymat^T (\bm K + \lambda \bm I)^{-1}$.
	By extension, the maximizing regression function is $\gvec(\xvec) = \Gmat^T \Ymat^T (\bm K + \lambda \bm I)^{-1} K (\xvec)^T$.
	Plugging-in this solution and partially maximizing with respect to $\Omat$ yields $\Omat = \frac{1}{n} \Gmat^T 
	(\Ymat^T\Ymat - \Ymat^T \bm K(\bm K + \lambda \bm I)^{-1} \Ymat )\Gmat = \Gmat^T 
	\bm S_{\Ymat|\bm K}^{\lambda}\Gmat$, where $\bm S_{\Ymat|\bm K}^{\lambda} =  \frac{1}{n} \Ymat^T\Ymat -  \frac{1}{n}\Ymat^T \bm K(\bm K + \lambda \bm I)^{-1} \Ymat$.
	Plugging these expressions for the maximizing $\gvec$ and $\Omat$ into $L_2(\Eset, \gvec, \Omat) - (\lambda/2) \rho(\gvec, \Omat)$, we get the following partially-maximized expression for the material component of the objective function:
	$$
		\mathcal L_2(\Eset) = -(n/2) \log | \Gmat^T 
		\bm S_{\Ymat|\bm K}^{\lambda}\Gmat |,
	$$
	which only depends on $\Gmat$.
	
	We follow similar steps to simplify $\mathcal{L}_1(\Eset, \Omat_0)$.
	Partially-maximizing with respect to $\Omat_0$ yields
	$\Omat_0 = \Gmat_0^T \bm S_{\Ymat} \Gmat_0$ where $\bm S_{\Ymat} = \frac{1}{n} \Ymat^T\Ymat$. 
	Plugging this in, we find
	$$
		\mathcal L_1(\Eset) = -(n/2) \log | \Gmat_0^T \bm S_{\Ymat}\Gmat_0 | =  - (n/2) \log | \bm S_{\Ymat} | -(n/2) \log | \Gmat^T \bm S_{\Ymat}^{-1} \Gmat |.
	$$
	
	Combining these partially-maximized terms, we see that estimating $\envlp$ amounts to solving
	\begin{equation}
		\hatenvlp = \spann \{ \argmin_{\bm G \in \text{Gr}(r,u)} \log| \bm G^T \bm S_{\Ymat}^{-1} \bm G|  + \log|  \bm G^T \bm S_{\Ymat|\bm K}^{\lambda} \bm G|  \}, \label{eqn:grassmanian optim}
	\end{equation}
	where $\text{Gr}(r,u) = \{\bm G \in \reals^{r \times u} : \bm G \text{ is a semi-orthogonal matrix}\}$.
	The optimal semi-orthogonal matrix $\bm G$ for \eqref{eqn:grassmanian optim} may not be unique, but the estimated envelope subspace $\hatenvlp$ will be.
	\cite{Cook2016} proposed a fast iterative algorithm for solving Grassmanian optimization problems of the form of \eqref{eqn:grassmanian optim}.
	We follow their approach to compute $\hatenvlp$.

	Because we derived \eqref{eqn:grassmanian optim} by partially maximizing the objective with respect to $\gvec$, $\Omat_0$, and $\Omat_0$ for a fixed $\Gmat$, we obtain our estimates of those parameters using an estimated basis for $\hatenvlp$.
	Let $\hatGmat$ denote any semi-orthogonal basis matrix for $\hatenvlp$ and let $\hatGmat_0$ denote any semi-orthogonal basis matrix for the orthogonal complement of $\hatenvlp$.
	Given a basis $\hatGmat$, our estimate of $\amat$ is $\hat{\amat}^{\lambda} = (\hatGmat)^T \Ymat^T(\bm K + \lambda \bm I)^{-1}$, our kernelized estimate of $\gvec$ is $\hat \gvec^{\lambda}(\xvec) = \hat \amat^{\lambda} K(x)^T$, and our estimate of $\fvec$ is $\hat \fvec^{\lambda}(\xvec) = \hatGmat \hat \gvec^{\lambda}(\xvec)$.
	Our estimates for the covariance matrices from the response envelope model \eqref{eqn:envelope model} are
	$\widehat \Omat^{\lambda} = (\hatGmat)^T \bm S_{\Ymat|\bm K}^{\lambda} \hatGmat$, 
	$\widehat{\Omat}_0 = (\hatGmat_0)^T \bm S_{\Ymat} \hatGmat_0$, and 
	$\widehat \Sigmat = \hatGmat \widehat{\Omat}^{\lambda} (\hatGmat)^T + \hatGmat_0 \widehat{\Omat} (\hatGmat_0)^T$.
	We refer to this nonparametric estimator of the response envelope and regression function in an RKHS as the \textit{kernel envelope} (KENV) estimator.
	
	We can express the KENV estimate of $\fvec(\xvec)$ succinctly as $\hat \fvec^{\lambda}(\xvec) = \hatGmat (\hatGmat)^T \Ymat^T(\bm K + \lambda \bm I)^{-1}  K(\xvec)^T$.
	We recognize this as the projection of the kernel ridge regression estimator, $\tilde \fvec(\xvec) = \Ymat^T(\bm K + \lambda \bm I)^{-1}  K(\xvec)^T$, onto the estimated envelope subspace $\hatenvlp$. 
	In this relation, we see that the KENV estimated regression function $\hat \fvec^{\lambda}(\xvec)$ depends only on the estimated subspace $\hatenvlp$, not on the specific choice of basis.

	\subsection{Relation to existing methods}
	KENV generalizes several existing multivariate regression methods.
	\citet{kwon_enhanced_2024}'s enhanced envelope estimator can be viewed as KENV with a linear kernel.
	Likewise, kernel ridge regression is a special case of KENV with $u = r$. 

	One key difference between KENV and the martingale difference divergence envelope (MDDE) and central mean envelope (CME) estimators of \citet{zhang_envelopes_2020} is that KENV directly and simultaneously estimates the envelope $\envlp$ and the corresponding enveloped regression function $\fvec$ from the nonparametric envelope model \eqref{eqn:envelope model} while MDDE and CME do not provide an estimate of $\fvec$.
	The MDDE and CME estimators are model-free and defined in terms of the expectation and covariance---in particular, they estimate subspaces $\mathcal S$ and $\mathcal T$ such that the projections of $\yvec$ onto the subspaces and their orthogonal complements satisfy $\E(\bm Q_{\mathcal S}\yvec| \xvec) = \E(\bm Q_{\mathcal S}\yvec )$ and $\Cov(\bm P_{\mathcal S} \yvec, \bm Q_{\mathcal S} \yvec) = 0$, in the case of the MDDE, and $\E(\bm Q_{\mathcal T}\yvec| \xvec) = \E(\bm Q_{\mathcal T}\yvec )$ and $\Cov(\bm P_{\mathcal T} \yvec, \bm Q_{\mathcal T} \yvec|\xvec) = 0$, in the case of the CME.
	As such, their formulations do not include a regression function that needs to be estimated.
	For KENV, on the other hand, the envelope and regression function are linked: the immaterial part of $\yvec$ is exactly the portion of $\yvec$ that cannot be explained by $\fvec(\xvec)$ per conditions \ref{condition i} and \ref{condition ii}.

	\subsection{Implementation details}%

	In practice, the envelope dimension $u$ is typically unknown.
	We treat it as a hyperparameter for KENV which must be tuned, along with $\lambda$ and, in some cases, a kernel hyperparameter $\sigma$.
	We perform a grid search over ranges of hyperparameter values and tune them with $M$-fold cross-validation (CV).
	
	For fixed $u$ and $\sigma$, we accelerate computation by fitting KENV for a sequence of penalty parameter values $\lambda_1 > \ldots > \lambda_L$.
	Evaluating the kernel function $K(\cdot, \cdot)$ can be computationally costly.
	Because $\lambda$ has no bearing on $K(\cdot, \cdot)$, however, we can compute the gram matrix $\bm K$ once for the entire sequence. %
	Likewise, when making a prediction at a new value $\xvec_{new}$, we can compute $K(\xvec_{new})$ once, then obtain $\hat \fvec^{\lambda_l} (\xvec_{new}) = \widehat{ \bm \Gmat}^{\lambda_l} \hat{\amat}^{\lambda_l} K(\xvec_{new})^T $ for each $\lambda_l$ in the sequence.
	We further speed-up model fitting by warm-starting computation of the envelope subspace---that is, for $\lambda_l$, we initialize the iterative algorithm to solve \eqref{eqn:grassmanian optim} with the estimated basis for the previous $\lambda$ value, $\hat \Gmat^{\lambda_{l-1}}$.

	\section{Theoretical Study}\label{sec:theory}
	In this section, we study the prediction risk of the KENV estimator.
	First we compare the in-sample prediction risk of KENV with that of kernel ridge regression (KRR). 
	Then we study the limiting risk of the KENV estimator as $n \to \infty$.
	
	\subsection{Comparison of in-sample risk with kernel ridge regression}
	For an estimator $\hat \fvec$ of $\fvec$, we define the in-sample prediction risk to be:
	\begin{equation}
		R(\hat \fvec (\Xmat)|\Xmat) = \E\left[ \norm{ \hat \fvec(\Xmat) - \fvec(\Xmat) }_F^2 \middle| \Xmat \right], \label{eqn:in sample risk} 
	\end{equation}
	where $\Xmat$ denotes the design matrix from the training data, $\fvec(\Xmat) = [\fvec(\xvec_1) \cdots \fvec(\xvec)]^T$, $\hat \fvec(\Xmat)$ is defined equivalently, and the expectation is taken with respect to $\Ymat$.
	We can decompose the in-sample risk into bias and variance components:
	\begin{align}
		\E\left[ \norm{ \hat \fvec(\Xmat) - \fvec(\Xmat) }_F^2 \middle| \Xmat \right]
		& = \E\left[ \norm{ \vect( \hat \fvec(\Xmat)) - \vect( \fvec(\Xmat) ) }_2^2 \middle| \Xmat \right] \notag\\
		& = \norm{ \vect( \E[  \hat \fvec(\Xmat)| \Xmat ] - \hat \fvec(\Xmat) ) }_2^2 + \tr\{ \Var[ \vect( \hat \fvec(\Xmat)) |\Xmat ] \} \label{eqn:bv decomp}.
	\end{align}
	
	To facilitate our theoretical analysis, we assume that a basis $\Gmat$ for $\envlp$ is known. 
	This has been done in prior studies of envelope methods to understand their underlying mechanisms \citep{Cook_foundations_2015, Cook_simultaneous_2015, kwon_enhanced_2024}.
	As a consequence of this assumption, these results do not account for the variability arising from estimating of $\Gmat$. 
	Nevertheless, they provide some insight into the relative performances of KENV and KRR.
	
	Let $\hat \fvec^R(\Xmat)$ denote the KRR estimator and 	$\hat \fvec^E_{\Gmat}(\Xmat)$ denote the KENV estimator with envelope basis $\Gmat$.
	The following result compares their in-sample prediction risks:
	\begin{prop}\label{prop:in sample risk comparison}
		Under a nonparametric response envelope model \eqref{eqn:envelope model},
		$ R( \hat \fvec^E_{\Gmat}(\Xmat) | \Xmat ) \leq R( \hat \fvec^R(\Xmat) | \Xmat )$, with a strict inequality if $u < r$ and $\tr(\Omat_0) > 0$.
	\end{prop}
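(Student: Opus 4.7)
}
The plan is to write both estimators in closed form on the training design, apply the bias--variance decomposition in \eqref{eqn:bv decomp}, and show that the two estimators have identical bias while KENV strictly dominates in variance by an amount determined by $\tr(\Omat_0)$.

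First, I would write out the fitted values on $\Xmat$. Let $\bm H_\lambda = \bm K(\bm K + \lambda \bm I)^{-1}$, so that $\tilde \fvec(\Xmat) = \bm H_\lambda \Ymat$ is the KRR fit on training points and, using the closed-form KENV expression $\hat \fvec^{\lambda}(\xvec) = \Gmat\Gmat^T \Ymat^T(\bm K + \lambda \bm I)^{-1} K(\xvec)^T$, the KENV fit is $\hat \fvec^{E}_{\Gmat}(\Xmat) = \bm H_\lambda \Ymat \bm P_{\Gmat}$ with $\bm P_{\Gmat} = \Gmat\Gmat^T$. Writing $\Ymat = \fvec(\Xmat) + \bm E$ with $\bm E = [\epvec_1 \cdots \epvec_n]^T$ then gives a clean additive decomposition of each fit.

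Next, I would verify that the two bias vectors coincide. Since a basis for $\envlp$ is known and condition \ref{condition a} provides $\fvec(\xvec) \in \envlp$ for every $\xvec$, the rows of $\fvec(\Xmat)$ lie in the column span of $\Gmat$, whence $\fvec(\Xmat) \bm P_{\Gmat} = \fvec(\Xmat)$. Taking conditional expectations,
\begin{equation*}
\E[\hat \fvec^{R}(\Xmat)\mid \Xmat] - \fvec(\Xmat) = (\bm H_\lambda - \bm I)\fvec(\Xmat) = \E[\hat \fvec^{E}_{\Gmat}(\Xmat)\mid \Xmat] - \fvec(\Xmat),
\end{equation*}
so the squared-bias contributions in \eqref{eqn:bv decomp} are identical.

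Then I would compute the variance contributions using Kronecker product algebra. Because the $\epvec_i$ are i.i.d.\ with covariance $\bm \Sigma$, one has $\Var[\vect(\bm E^T)\mid \Xmat] = \bm I_n \otimes \bm \Sigma$. Applying $\vect(\bm A \bm B \bm C) = (\bm C^T \otimes \bm A)\vect(\bm B)$ to $\hat \fvec^{R}(\Xmat)^T = \Ymat^T \bm H_\lambda^T$ and $\hat \fvec^{E}_{\Gmat}(\Xmat)^T = \bm P_{\Gmat}\Ymat^T \bm H_\lambda^T$ gives
\begin{align*}
\Var[\vect(\hat \fvec^{R}(\Xmat)^T)\mid \Xmat] &= (\bm H_\lambda \bm H_\lambda^T) \otimes \bm \Sigma,\\
\Var[\vect(\hat \fvec^{E}_{\Gmat}(\Xmat)^T)\mid \Xmat] &= (\bm H_\lambda \bm H_\lambda^T) \otimes (\bm P_{\Gmat} \bm \Sigma \bm P_{\Gmat}) = (\bm H_\lambda \bm H_\lambda^T) \otimes (\Gmat \Omat \Gmat^T),
\end{align*}
where the last equality uses the envelope decomposition $\bm \Sigma = \Gmat \Omat \Gmat^T + \Gmat_0 \Omat_0 \Gmat_0^T$ from \eqref{eqn:envelope model}. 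Taking traces and using $\tr(\Gmat \Omat \Gmat^T) = \tr(\Omat)$ and $\tr(\bm \Sigma) = \tr(\Omat) + \tr(\Omat_0)$, the difference in variance terms is exactly $\tr(\bm H_\lambda \bm H_\lambda^T)\,\tr(\Omat_0) \geq 0$. Combining with the equality of biases yields $R(\hat \fvec^{E}_{\Gmat}(\Xmat)\mid \Xmat) \leq R(\hat \fvec^{R}(\Xmat)\mid \Xmat)$, with strict inequality whenever $u < r$ (so that $\Omat_0$ is a nontrivial positive definite matrix with $\tr(\Omat_0) > 0$) and $\bm H_\lambda \neq \bm 0$.

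I do not foresee any serious obstacle; the only care required is in tracking the vectorization convention so that the Kronecker product factors end up on the correct side, and in confirming that $\fvec(\Xmat) \bm P_{\Gmat} = \fvec(\Xmat)$ is applied to rows rather than columns. The strict-inequality claim is essentially automatic, since $\Omat_0$ is positive definite when $u < r$ and $\bm H_\lambda \bm H_\lambda^T$ is nonzero for any nontrivial kernel.
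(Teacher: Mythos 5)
Your proposal is correct and follows essentially the same route as the paper's proof: write both fits as linear smoothers of $\Ymat$, show via $\fvec(\Xmat)\Gmat\Gmat^T = \fvec(\Xmat)$ that the biases coincide, and compute the variance gap as $\tr\{\Omat_0\}\tr\{\bm K(\bm K+\lambda\bm I)^{-2}\bm K\}$ (your $\tr(\bm H_\lambda\bm H_\lambda^T)\tr(\Omat_0)$ is the same quantity since $\bm K$ is symmetric). Your bookkeeping of the vectorization and Kronecker factors is in fact slightly more careful than the paper's, and your added caveat that strictness also requires $\bm H_\lambda \neq \bm 0$ is a legitimate, if minor, refinement.
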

	Proposition \ref{prop:in sample risk comparison} establishes that, when $\Gmat$ is known, the in-sample risk of the KENV estimator never exceeds that of the KRR estimator. 
	If the response is generated under a nonparametric response envelope model \eqref{eqn:envelope model} with a non-constant immaterial component (i.e. $\Omat_0 \neq \bm 0$), then the risk for the KENV estimator is guaranteed to be strictly smaller than that of KRR.
	
	In the proof of the theorem, we find that KENV attains a lower risk by reducing the variance while achieving the same bias as KRR.
	Naturally, we expect that some bias and additional variance will arise from the estimation of $\Gmat$ in practice.
	However, Proposition \ref{prop:in sample risk comparison} provides some intuition for how KENV can improve the estimation of $\bm f$ given the envelope basis.
	Moreover, we find in our empirical study (Sections \ref{sec:sims} and \ref{sec:real data}) that KENV consistently delivers more accurate predictions than KRR in practice, suggesting that the cost of estimating $\Gmat$ is offset by the resulting variance reduction in estimating $\bm f$.
	
	\subsection{Limiting risk of KENV}
	For the remainder of the theoretical study, we investigate the risk of KENV as the sample size diverges to infinity. 
	For a loss function $L(\yvec, \fvec(\xvec))$ we define the \textit{risk} to be
	$
		R(\bm f) = \E[ L(\bm y,  \fvec(\xvec)) ] 
	$
	and the \textit{optimal risk} to be
	$
		R^* = \inf \{ R(\bm f) \mid \fvec:\reals^p \to \reals^r \text{ is measurable}\}.
	$
	We say that an estimator $\hat \fvec$ is \textit{risk consistent} if its risk converges to the optimal risk in probability as $n \to \infty$.
	
 	To facilitate our study, we assume $\Gmat$ and $\Omat$ are known.
 	We allow the penalty parameter $\lambda_n$ in \eqref{eqn:objective in terms of g} to vary with $n$.
 	To establish the risk consistency of the KENV solution, we assume the kernel function is \textit{universal}:
 	\begin{definition}
 		Suppose that $\mathcal X \in \reals^p$ is compact.
 		We say that a continuous kernel $K(\cdot, \cdot): \mathcal X \times \mathcal X \mapsto \reals$ is \textit{universal} if the RKHS generated by $K$ is dense in the space of continuous functions on $\mathcal X$ with respect to the supremum norm $\norm{f}_{\infty} = \sup_{x \in \mathcal X} |f(x)|$.
 	\end{definition}
 	Intuitively, if $K$ is a universal kernel, then for any continuous function $f:\mathcal{X} \to \reals$ there exists a function $h \in \RKHS$ that is arbitrarily close to $f$.
 	Many popular kernels are known to be universal, including the Gaussian and exponential kernels \citep{steinwart_influence_2001}. 
 	\citet{steinwart_influence_2001} and \cite{christmann_consistency_2007} established the consistency of support vector machine and kernel regression estimators with universal kernels. 	
 	We similarly leverage the properties of universal kernels to show that the KENV solution is consistent under the mild assumption that the responses have finite second moments.
 	\begin{theorem}\label{thm:risk consistency}
 		Suppose that $\mathcal X \in \reals^p$ is compact, that $\mathcal H_K$ is an RKHS of a universal kernel on $\mathcal X$, and that $\E[ y_j^2 ]  < \infty$ for $j = 1, \ldots, r$.
 		If the penalty parameter $\lambda_n$ satisfies $\lambda_n/n \to 0$ and $\lambda_n/n^{3/4} \to \infty$ as $n \to \infty$, then the KENV solution satisfies
 		$
 			R(\hat \gvec) \to_p R^* \text { as } n \to \infty.
 		$ 
 	\end{theorem}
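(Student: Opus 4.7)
The plan is to follow the classical two-stage strategy for kernel learning consistency, as in \citet{steinwart_influence_2001} and \citet{christmann_consistency_2007}, adapted to the envelope geometry. Because $\Gmat$ and $\Omat$ are assumed known, the KENV objective in \eqref{eqn:objective in terms of g} decouples cleanly: with whitened material responses $\tilde{\yvec}_i = \Omat^{-1/2}\Gmat^T \yvec_i$ and $\tilde{\gvec} = \Omat^{-1/2}\gvec$, the estimator satisfies
\begin{equation*}
  \Omat^{-1/2}\hat{\gvec} \;=\; \argmin_{\tilde h_1,\ldots,\tilde h_u \in \RKHS}\; \frac{1}{n}\sum_{i=1}^n \norm{\tilde{\yvec}_i - (\tilde h_1(\xvec_i),\ldots,\tilde h_u(\xvec_i))^T}_2^2 \;+\; \frac{\lambda_n}{n}\sum_{j=1}^u \norm{\tilde h_j}_{\RKHS}^2,
\end{equation*}
i.e.\ multivariate kernel ridge regression on the whitened material responses. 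Under squared-error loss, $R^*$ is attained at $\fvec^*(\xvec)=\E[\yvec\mid\xvec]=\Gmat\gvec^*(\xvec)$, and I interpret $R(\hat{\gvec})$ as $R(\Gmat\hat{\gvec})$.

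I would then decompose the excess risk into an approximation and an estimation term. Letting $\gvec_{\lambda_n}$ denote the population minimizer of $R(\Gmat\gvec)+(\lambda_n/n)\rho(\Omat^{-1/2}\gvec)$ over $\gvec$ with components in $\RKHS$, write
\begin{equation*}
  R(\hat{\gvec}) - R^* \;=\; \underbrace{\bigl[R(\hat{\gvec}) - R(\gvec_{\lambda_n})\bigr]}_{\text{estimation}} \;+\; \underbrace{\bigl[R(\gvec_{\lambda_n}) - R^*\bigr]}_{\text{approximation}}.
\end{equation*}
For the approximation term, universality of $K$ combined with compactness of $\mathcal X$ and finite second moments implies that $\RKHS$ is dense in $L^2(P_{\xvec})$ (continuous functions are $L^2$-dense on a compact set, and the universal kernel gives sup-norm density among continuous functions). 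Thus for any $\epsilon>0$ there exists $\gvec^\epsilon$ with components in $\RKHS$ such that $R(\Gmat\gvec^\epsilon) < R^* + \epsilon$. Plugging $\gvec^\epsilon$ into the optimality inequality for $\gvec_{\lambda_n}$ and noting that $(\lambda_n/n)\rho(\Omat^{-1/2}\gvec^\epsilon)\to 0$ (since $\lambda_n/n\to 0$ and $\rho(\Omat^{-1/2}\gvec^\epsilon)$ is a fixed constant) gives $R(\gvec_{\lambda_n}) - R^*\to 0$.

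For the estimation term, comparing $\hat{\gvec}$ with the zero function in the empirical objective yields the a priori bound $\rho(\Omat^{-1/2}\hat{\gvec}) \leq (1/\lambda_n)\sum_{i=1}^n \norm{\tilde{\yvec}_i}_2^2$, so each component of $\Omat^{-1/2}\hat{\gvec}$ lies in an RKHS ball of radius $O(\sqrt{n/\lambda_n})$ with high probability (using $\E\norm{\tilde{\yvec}}_2^2<\infty$). Combining empirical and population optimality in the usual way and discarding nonnegative penalty terms gives
\begin{equation*}
  R(\hat{\gvec}) - R(\gvec_{\lambda_n}) \;\leq\; 2\sup_{\bm h \in \mathcal B_n} \bigl| \hat R(\bm h) - R(\bm h) \bigr| \;+\; (\lambda_n/n)\,\rho(\Omat^{-1/2}\gvec_{\lambda_n}),
\end{equation*}
where $\mathcal B_n$ is an RKHS ball of radius $O(\sqrt{n/\lambda_n})$. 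Bounding the supremum via Rademacher-complexity bounds for squared-loss kernel regression (with truncation of $\yvec$ justified by $\E\norm{\yvec}_2^2<\infty$ and Markov's inequality) produces a rate of order $\sqrt{n}/\lambda_n$, which vanishes under $\lambda_n/n^{3/4}\to\infty$.

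The main obstacle is the uniform-concentration step over an RKHS ball whose radius diverges with $n$, complicated by the fact that $\yvec$ is only assumed to have finite second moments rather than being bounded. The two rate conditions $\lambda_n/n\to 0$ and $\lambda_n/n^{3/4}\to\infty$ are precisely what is needed to simultaneously control the approximation penalty (favoring small $\lambda_n$) and the empirical-process fluctuation over $\mathcal B_n$ together with the response tails (both favoring large $\lambda_n$); making these ingredients fit together through a careful truncation and tail-probability argument is the delicate part of the proof.
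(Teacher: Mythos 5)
Your proposal is correct in spirit and begins exactly as the paper does: because $\Gmat$ and $\Omat$ are known, whiten the material responses via $\bm z_i = \Omat^{-1/2}\Gmat^T\yvec_i$ and $\hvec = \Omat^{-1/2}\gvec$, observe the map is bijective and preserves RKHS membership, and reduce KENV to (multivariate) kernel ridge regression on $\bm z$. Where you diverge is in what happens next. The paper exploits the fact that the whitened problem separates into $u$ independent univariate kernel ridge regressions, invokes Theorem 12 of \citet{christmann_consistency_2007} to get $R_j(\hat h_j)\to_p R_j^*$ for each component under exactly the stated conditions $\lambda_n/n\to 0$ and $\lambda_n/n^{3/4}\to\infty$, and finishes with a union bound to assemble the multivariate statement. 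You instead attempt to re-derive that univariate consistency from first principles: the approximation/estimation decomposition, density of $\RKHS$ in $L^2(P_{\xvec})$ from universality plus compactness, the a priori bound $\rho(\Omat^{-1/2}\hat\gvec) = O_p(n/\lambda_n)$, and a uniform concentration argument over an RKHS ball of radius $O(\sqrt{n/\lambda_n})$ with truncation to handle the merely square-integrable responses. That route is viable and more self-contained, but the ``main obstacle'' you identify at the end---uniform concentration over a ball whose radius diverges, combined with unbounded responses, and showing that $\lambda_n/n^{3/4}\to\infty$ suffices---is precisely the content of the cited theorem; your sketch gestures at Rademacher bounds and truncation without carrying them out, so as written it re-states the hard lemma rather than proving it. If you intend a complete proof you should either execute that truncation and tail-probability argument in detail or, as the paper does, cite \citet{christmann_consistency_2007} for it. Two smaller remarks: your treatment of the multivariate risk directly (rather than componentwise plus a union bound) is fine because the penalty $\sum_j\norm{\tilde h_j}_{\RKHS}^2$ decouples the problem anyway; and both you and the paper are slightly informal about whether the risk is measured on the whitened scale $R(\hvec)=\E[\norm{\bm z-\hvec(\xvec)}_2^2]$ or on the original scale $R(\Gmat\hat\gvec)$---since $\Gmat$ is semi-orthogonal and $\Omat$ is fixed and positive definite, the excess risks agree up to bounded constants, but this equivalence deserves a sentence.
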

 	This result establishes the risk consistency of the KENV estimator in the case where $\Gmat$ and $\Omat$ are known.
 	This consistency result separates KENV from existing response envelope methods which require the true model to be linear for their convergence guarantees to hold \citep{Cook2010, kwon_enhanced_2024}.
 	Unlike those parametric envelope estimators, KENV achieves risk consistency under the general envelope model \eqref{eqn:envelope model} which encompasses both the parametric and nonparametric cases.
 	For modelers unsure as to whether the true multivariate model is linear, KENV is a flexible method with asymptotic risk guarantees that hold under far more general model assumptions than alternative envelope methods.
 		
	\section{Simulations} \label{sec:sims}

	We conduct several simulation studies to evaluate the finite-sample performance of KENV.
	First we compare the prediction performance of KENV with that of other multivariate regression methods under different data-generating models.
	Following that we compare the fitted curves from KENV and kernel ridge regression to gain a better understanding of how KENV reduces the prediction risk.
	
	\subsection{Prediction comparison}
	For our first set of simulations, we compare the prediction performance of KENV with that of \citet{kwon_enhanced_2024}'s enhanced envelope estimator (EENV), kernel ridge regression (KRR), ridge regression (RR), and a support vector machine (SVM) \citep{cortes_support-vector_1995}.
	We include KENV and KRR models with the Gaussian and Laplacian kernels---for which $K(\xvec, \xvec') = \exp(- \norm{\xvec - \xvec'}_2^2/\sigma^2)$ and $K(\xvec, \xvec') = \exp(- \norm{\xvec - \xvec'}_2/\sigma)$, respectively---in the comparison, denoting the Gaussian kernel models by KENV(G) and KRR(G) and the Laplacian kernel models by KENV(L) and KRR(L).
	We use the Gaussian kernel for the SVM, fitting it via $r$ univariate regressions with the \href{https://cran.r-project.org/web/packages/e1071/index.html}{e1071} R package.
	We tune $u$, $\lambda$, and $\sigma$ using $5$-fold CV on the training data.
	
	\subsubsection{Response envelope data generation}
	We examine several data-generating models in our simulations.
	In every case, the true model is a nonparametric response envelope \eqref{eqn:envelope model}.
	The procedures for generating $\xvec$ and $\gvec$ vary across simulation settings and will be described later.
	Given $ \{\xvec_i\}_{i \in [n]} $ and $\gvec$, we generate $\{\yvec_i\}_{i \in [n]}$ with an envelope structure as follows:\\
	\indent \textit{Input:} $\mathcal{P} \subseteq [r]$, $\Omat$, $\Omat_0$
	\begin{enumerate}[leftmargin=0.5in]
		
		\item Generate a random orthogonal matrix $\bm V \in \reals^{r \times r}$ of eigenvectors of $\Sigmat$.
		\item Set $\Gmat = \bm V_{(\mathcal P)}$ and $\Gmat_0 = \bm V_{([r]\setminus\mathcal{P} )}$. Then $\Sigmat =  \Gmat \Omat \Gmat^T + \Gmat_0 \Omat_0 \Gmat_0^T$.
		\item Generate $\yvec_i =  \Gmat \gvec(\xvec_i) +  \epvec_i$ where $\epvec_i \sim N(\bm 0, \Gmat \Omat \Gmat^T + \Gmat_0 \Omat_0 \Gmat_0^T)$ for $i \in [n]$.
	\end{enumerate}
	
	\indent \textit{Return:} $\{\yvec_i\}_{i \in [n]}$, $\fvec = \Gmat \gvec$\\
	Under this model $\mathcal{P}$ denotes the indices of the columns of $\bm V$ that comprise the response envelope $\envlp$.
	We vary $r,u,p, \Omat$, and $\Omat_0$ across simulation settings to create different envelope structures.

	\subsubsection{Single predictor simulations}
	For our first comparison, we generate data with a single predictor.
	We examine two data-generating models:
	\begin{itemize}[leftmargin=0.25in]
		
		\item \textbf{Model 1:} $r = 3, u = 2, \Omat = \left[\begin{smallmatrix}
			4 & 0\\
			0 & 2
		\end{smallmatrix}\right], \Omat_0 = \left[	5 \right]$,
		$g_1(x) = 2 \sin(x) + x^2/5 - x/2$, and 
		$g_2(x) = \cos(x) - x^2/10 + x/3 $.
		\item \textbf{Model 2:} $r = 4, u = 1, \Omat = \left[	4 \right], \Omat_0 = \left[\begin{smallmatrix}
			5 & 0 & 0\\
			0 & 2 & 0 \\
			0 & 0 & 1
		\end{smallmatrix}\right]$, and
		$
			g_1(x) = 2 \sin(x) + x^2/5 - x/2.
		$
	\end{itemize}
	In both cases, we generate $x_i \iid \text{Uniform}(-5,5)$ for $i \in [n]$.
	Within each of these cases, we run simulations with $n = 100, 200,$ and $400$.
	
	To assess the prediction performance, we generate a test set of $n_{test} = 2000$ observations and compute the mean squared error $MSE(\hat \fvec) = \frac{1}{n_{test}}\sum_{i' = 1}^{n_{test}} \sum_{j = 1}^r(\hat f_j(x_{i'}) - f_j(x_{i'}))^2$ and mean absolute error $MAE(\hat \fvec) =  \frac{1}{n_{test}}\sum_{i' = 1}^{n_{test}} \sum_{j = 1}^r | \hat f_j(x_{i'}) - f_j(x_{i'}) |$,
	for each estimator $\hat \fvec = (\hat f_1, \ldots, \hat f_r)^T$.
	We run $100$ replications in each setting and report the mean and standard error of MSE and MAE across replications.
	We further report the proportion of replications in which the selected envelope dimensions for KENV and EENV coincide with the true dimension $u$, denoting this by $\#u$ in the results.
	
	Tables \ref{tbl:univariate case 1} and \ref{tbl:univariate case 2} report the prediction results from the single predictor simulations with Models 1 and 2, respectively.
	The best performing method is highlighted in bold in each row.
	Some general patterns hold for both data-generating models and for all values of $n$:
	KENV delivers more accurate predictions than KRR with both the Gaussian and Laplace kernels in every case.
	Among the three kernels examined, the Gaussian kernel consistently delivers the most accurate predictions, followed closely by the Laplacian kernel.
	Moreover, KENV(G) and KENV(L) outperform SVM in every case.
	
	\spacingset{1.15}
	\begin{table}
		\caption{Prediction comparison - single predictor model 1}
		\label{tbl:univariate case 1}
		\centering
		\resizebox{\columnwidth}{!}{
			\begin{tabular}{llllllll}
				\toprule
				& KENV(G) & KRR(G) & KENV(L) & KRR(L) & EENV & RR & SVM\\
				\midrule
				\addlinespace[0.3em]
				\multicolumn{8}{l}{{n = 100}}\\
				\hline\hspace{1em}MSE & \textbf{0.62 (0.04)} & 0.71 (0.04) & 0.71 (0.03) & 0.85 (0.03) & 5.98 (0.03) & 5.97 (0.03) & 0.92 (0.04)\\
				\hspace{1em}MAE & \textbf{1.04 (0.03)} & 1.12 (0.03) & 1.12 (0.02) & 1.24 (0.02) & 3.3 (0.02) & 3.3 (0.02) & 1.3 (0.03)\\
				\hspace{1em}\#u & 0.81 & - & \textbf{0.85} & - & 0.3 & - & -\\
				\addlinespace[0.3em]
				\multicolumn{8}{l}{{n = 200}}\\
				\hline\hspace{1em}MSE & \textbf{0.31 (0.01)} & 0.37 (0.01) & 0.41 (0.01) & 0.51 (0.01) & 5.83 (0.02) & 5.82 (0.02) & 0.5 (0.02)\\
				\hspace{1em}MAE & \textbf{0.73 (0.02)} & 0.82 (0.02) & 0.85 (0.01) & 0.96 (0.01) & 3.29 (0.02) & 3.29 (0.02) & 0.94 (0.02)\\
				\hspace{1em}\#u & \textbf{0.84} & - & 0.83 & - & 0.46 & - & -\\
				\addlinespace[0.3em]
				\multicolumn{8}{l}{{n = 400}}\\
				\hline\hspace{1em}MSE & \textbf{0.15 (0.01)} & 0.19 (0.01) & 0.22 (0.01) & 0.27 (0.01) & 5.75 (0.01) & 5.75 (0.01) & 0.24 (0.01)\\
				\hspace{1em}MAE & \textbf{0.51 (0.01)} & 0.58 (0.01) & 0.62 (0.01) & 0.71 (0.01) & 3.26 (0.02) & 3.26 (0.02) & 0.65 (0.01)\\
				\hspace{1em}\#u & \textbf{0.88} & - & 0.87 & - & 0.59 & - & -\\
				\bottomrule
			\end{tabular}
		}
	\end{table}
	
	\begin{table}
		\caption{Prediction comparison - single predictor model 2}
		\label{tbl:univariate case 2}
		\centering
		\resizebox{\columnwidth}{!}{
			\begin{tabular}{llllllll}
				\toprule
				& KENV(G) & KRR(G) & KENV(L) & KRR(L) & EENV & RR & SVM\\
				\midrule
				\addlinespace[0.3em]
				\multicolumn{8}{l}{{n = 100}}\\
				\hline\hspace{1em}MSE & \textbf{0.62 (0.04)} & 0.78 (0.03) & 0.68 (0.03) & 0.92 (0.03) & 4.56 (0.02) & 4.57 (0.02) & 0.99 (0.04)\\
				\hspace{1em}MAE & \textbf{1.16 (0.03)} & 1.35 (0.03) & 1.22 (0.03) & 1.47 (0.02) & 3 (0.03) & 3.01 (0.03) & 1.5 (0.03)\\
				\hspace{1em}\#u & 0.72 & - & \textbf{0.76} & - & 0.65 & - & -\\
				\addlinespace[0.3em]
				\multicolumn{8}{l}{{n = 200}}\\
				\hline\hspace{1em}MSE & \textbf{0.27 (0.02)} & 0.36 (0.02) & 0.33 (0.02) & 0.48 (0.01) & 4.36 (0.02) & 4.37 (0.02) & 0.5 (0.02)\\
				\hspace{1em}MAE & \textbf{0.75 (0.02)} & 0.92 (0.02) & 0.83 (0.02) & 1.08 (0.02) & 2.92 (0.02) & 2.93 (0.02) & 1.08 (0.02)\\
				\hspace{1em}\#u & 0.81 & - & \textbf{0.89} & - & 0.69 & - & -\\
				\addlinespace[0.3em]
				\multicolumn{8}{l}{{n = 400}}\\
				\hline\hspace{1em}MSE & \textbf{0.15 (0.01)} & 0.2 (0.01) & 0.18 (0.01) & 0.29 (0.01) & 4.28 (0.01) & 4.28 (0.01) & 0.27 (0.01)\\
				\hspace{1em}MAE & \textbf{0.56 (0.01)} & 0.68 (0.01) & 0.62 (0.01) & 0.84 (0.01) & 2.84 (0.03) & 2.84 (0.03) & 0.8 (0.01)\\
				\hspace{1em}\#u & 0.83 & - & \textbf{0.9} & - & 0.61 & - & -\\
				\bottomrule
			\end{tabular}
		}
	\end{table}
	\spacingset{1.7}
	
	EENV and RR, which model a linear relationship between the responses and predictors, perform much worse than the kernel methods.
	Notably, EENV does not improve on RR despite the fact that the response has an envelope structure, suggesting that its performance is degrading due to misspecification of the regression function.
	
	Comparing results across the data-generating models, we see that the gap between the KENV and KRR models is larger for Model 2 than for Model 1, providing MSE reductions on the order of 25\% for both the Gaussian and Laplacian kernels.
	This likely reflects the fact that the immaterial part of $\yvec$ accounts for a greater proportion of the variability under Model 2 than under Model 1 so there is more to be gained by removing the immaterial part.

	\subsubsection{Multiple predictor simulations}
	For our second prediction comparison, we generate data with multiple predictors: $\xvec \in \reals^p$.
	We generate the components of the true reduced regression function $\gvec$ under the ``random function generator'' model \citep{friedman_greedy_2001}, as follows:\\
	\indent \textit{Input:} $p, u$
	\begin{enumerate}[leftmargin=0.5in]
		
		\item For $j \in [u]$:
		\begin{enumerate}
			
			\item for $l \in [20]$:
			\begin{enumerate}
				\item Randomly sample $p_{jl}$ predictors $\xvec_{jl}$ from $\xvec$, where $p_{jl} = \min( \lfloor 1.5 + T_{jl} \rfloor, p  )$ with $T_{jl} \sim \text{Exp}(0.5)$.
				\item Generate a random coefficient $a_{jl} \sim \text{Unif}[-10, 10]$.
				\item Generate a random function $g_{jl}(\xvec_{jl})$ as
				$$
					g_{jl}(\xvec_{jl}) = \exp \left[ -\frac{1}{2} (\xvec_{jl} - \bm \mu_{jl} )^T \bm V_{jl} (\xvec_{jl} - \bm \mu_{jl} ) \right]
				$$
				where
				$ \bm \mu_{jl} \sim N(\bm 0, \bm I_{p_{jl}})$ and $\bm V_{jl}  = \bm U_{jl} \bm D_{jl} \bm U_{jl}^T$ with
				$\bm U_{jl}$ being a randomly generated $p_{jl} \times p_{jl}$ orthogonal matrix and $D_{jl} = \diag( d_{1l}, \ldots, d_{p_{jl} l})$ where $\sqrt{d_{kl}} \sim \text{Unif}[0.1, 2]$ for $k \in [p_{jl}]$.
				
			\end{enumerate}
			\item Set $g_j(\xvec) = \sum_{l = 1}^{20} a_{jl} g_{jl} (\xvec_{jl})$.
		\end{enumerate}
	\end{enumerate}
	
	\indent \textit{Return:} $\bm \gvec(\xvec)  = (g_1(\xvec), \ldots, g_u (\xvec))^T $.
	
	For the response envelope structure, we set $r = 4, u = 1, \Omat = \left[	4 \right],$ and $\Omat_0 = \left[\begin{smallmatrix}
		5 & 0 & 0\\
		0 & 2 & 0 \\
		0 & 0 & 1
	\end{smallmatrix}\right]$.
	We generate $\xvec_i \iid N( \bm 0, \Sigmat_{\xvec})$ with $[\Sigmat_{\xvec}]_{ij}  = \rho^{|i-j|}$ for $i,j \in [p]$, examining cases where $\rho = 0$ and $\rho = 0.8$.
	We set $n = 200$ in every case and run simulations with $p = 10, 200$, and $400$, giving us both low and high-dimensional settings.
	As in the single predictor simulations, we generate an independent test set of $n_{test} = 2000$ observations and collect the MSE, MAE, and $\#u$ from $100$ replications to assess model performance.

	Tables \ref{tbl:multivariate rho 0} and \ref{tbl:multivariate rho 0.8} report the results from the multiple predictor simulations with $\rho = 0$ and $\rho = 0.8$, respectively.	
	As in the single predictor case, KENV dominates KRR with both the Gaussian and Laplacian kernels.
	This pattern holds across both the low and high-dimensional settings, though the differences in prediction performance are more pronounced in low dimensions.
	SVM is more competitive with KRR than in the single predictor simulations, attaining lower test MSEs than KRR(G) and KRR(L) in several cases.
	KENV, however, still outperforms SVM in every case.
	As before, the kernel methods clearly outperform the linear methods.
	
	\spacingset{1.15}	
	\begin{table}
		\caption{Prediction comparison - multiple predictors with $\rho = 0$}
		\label{tbl:multivariate rho 0}
		\centering
		\resizebox{\columnwidth}{!}{
		\begin{tabular}{llllllll}
			\toprule
			& KENV(G) & KRR(G) & KENV(L) & KRR(L) & EENV & RR & SVM\\
			\midrule
			\addlinespace[0.3em]
			\multicolumn{8}{l}{{p = 10}}\\
			\hline\hspace{1em}MSE & \textbf{14.25 (0.48)} & 16.05 (0.51) & 15.43 (0.57) & 16.38 (0.57) & 25.39 (1.1) & 25.64 (1.1) & 16.05 (0.52)\\
			\hspace{1em}MAE & \textbf{5 (0.1)} & 5.64 (0.1) & 5.21 (0.11) & 5.61 (0.11) & 6.71 (0.17) & 6.82 (0.16) & 5.51 (0.11)\\
			\hspace{1em}\#u & \textbf{1} & - & 0.97 & - & 0.85 & - & -\\
			\addlinespace[0.3em]
			\multicolumn{8}{l}{{p = 200}}\\
			\hline\hspace{1em}MSE & \textbf{32.41 (1)} & 32.78 (1) & 32.59 (1) & 32.83 (1) & 35.92 (1.17) & 38.67 (1.19) & 32.75 (1.01)\\
			\hspace{1em}MAE & \textbf{7.66 (0.14)} & 7.79 (0.13) & 7.68 (0.14) & 7.77 (0.13) & 8.07 (0.15) & 8.77 (0.13) & 7.72 (0.14)\\
			\hspace{1em}\#u & 0.62 & - & 0.6 & - & \textbf{0.87} & - & -\\
			\addlinespace[0.3em]
			\multicolumn{8}{l}{{p = 400}}\\
			\hline\hspace{1em}MSE & \textbf{33.3 (1.06)} & 33.45 (1.07) & 33.39 (1.07) & 33.49 (1.07) & 37.45 (1.18) & 41.11 (1.2) & 33.52 (1.07)\\
			\hspace{1em}MAE & \textbf{7.8 (0.15)} & 7.85 (0.15) & 7.81 (0.15) & 7.84 (0.15) & 8.33 (0.16) & 9.08 (0.14) & 7.84 (0.15)\\
			\hspace{1em}\#u & 0.4 & - & 0.36 & - & \textbf{0.84} & - & -\\
			\bottomrule
		\end{tabular}
		}
	\end{table}
	
	\begin{table}
		\caption{Prediction comparison - multiple predictors with $\rho = 0.8$}
		\label{tbl:multivariate rho 0.8}
		\centering
		\resizebox{\columnwidth}{!}{
		\begin{tabular}{llllllll}
			\toprule
			& KENV(G) & KRR(G) & KENV(L) & KRR(L) & EENV & RR & SVM\\
			\midrule
			\addlinespace[0.3em]
			\multicolumn{8}{l}{{p = 10}}\\
			\hline\hspace{1em}MSE & \textbf{7.43 (0.25)} & 9.16 (0.28) & 7.85 (0.28) & 9.11 (0.28) & 25.34 (1.18) & 25.52 (1.17) & 9.16 (0.28)\\
			\hspace{1em}MAE & \textbf{3.6 (0.07)} & 4.33 (0.06) & 3.7 (0.07) & 4.3 (0.06) & 6.72 (0.16) & 6.8 (0.16) & 4.22 (0.07)\\
			\hspace{1em}\#u & \textbf{1} & - & \textbf{1} & - & 0.75 & - & -\\
			\addlinespace[0.3em]
			\multicolumn{8}{l}{{p = 200}}\\
			\hline\hspace{1em}MSE & \textbf{29.5 (1)} & 30.18 (1.01) & 29.71 (1.01) & 30.13 (1.01) & 32.48 (1.15) & 34.2 (1.15) & 29.96 (1)\\
			\hspace{1em}MAE & \textbf{7.17 (0.16)} & 7.42 (0.15) & 7.2 (0.16) & 7.37 (0.15) & 7.51 (0.17) & 8.07 (0.16) & 7.29 (0.16)\\
			\hspace{1em}\#u & 0.89 & - & 0.83 & - & \textbf{0.94} & - & -\\
			\addlinespace[0.3em]
			\multicolumn{8}{l}{{p = 400}}\\
			\hline\hspace{1em}MSE & \textbf{31.78} (1.05) & 32.26 (1.04) & 31.98 (1.05) & 32.3 (1.05) & 36.81 (1.23) & 40.13 (1.26) & 32.27 (1.05)\\
			\hspace{1em}MAE & \textbf{7.6 (0.17)} & 7.75 (0.16) & 7.62 (0.17) & 7.74 (0.16) & 8.17 (0.18) & 8.98 (0.16) & 7.69 (0.17)\\
			\hspace{1em}\#u & 0.68 & - & 0.58 & - & \textbf{0.97} & - & -\\
			\bottomrule
		\end{tabular}
		}
	\end{table}
	
	\spacingset{1.7}
	
	\subsubsection{Prediction comparison takeaways}
	The prediction comparison results demonstrate a few key advantages of KENV over competing methods.
	By expanding the flexibility of response envelope models, KENV was able to dominate EENV's prediction performance in every simulation setting.
	EENV, on the other hand, barely improved on ridge regression in the single predictor simulations and even delivered less accurate predictions in some cases, revealing that model misspecification can significantly hamper the performance of linear envelope methods and demonstrating the need for nonparametric envelope methods.
	
	KENV differs from the other nonparametric methods, KRR and SVM, in that it capitalizes on the envelope structure of the multivariate response.
	In doing so, KENV was able to consistently deliver more accurate predictions than KRR and SVM.
	Moreover, these performance gains held with both the Gaussian and Laplacian kernels, suggesting that the improvement over KRR is robust to different choices for the kernel, provided that the kernel is flexible enough to approximate the true relationship.
	
	\subsection{Fitted curve comparison}\label{subsec:fitted curves}
	To better understand how KENV improves on the prediction performance of KRR, we compare their fits with the true regression function on simulated data.
	By examining their fits across many simulation replications, we seek to gain some insight into the comparative bias and variance of the KENV and KRR estimators.
	
	We generate the data using Model 2 from the single predictor comparison.
	We generate one set of $n = 200$ predictor values to use across all simulation replications.
	In each replication, we generate a new realization of the responses and fit KENV and KRR models with the Gaussian kernel to that realization of the data.

	Figure \ref{fig:univariate fits} shows the means ($\pm$ two standard deviations) of the KENV(G) and KRR(G) fits for each component of the response over $100$ simulation replications.
	The true regression functions $f_j(x)$, $j = 1, \ldots, 4$, are shown as dashed lines in each of the plots.
	The curves are plotted over one realization of the simulated data for illustrative purposes.

	Examining the fits for the first three components of $\fvec$, we see that the KENV(G) estimates have lower variance than the KRR(G) estimates.
	Comparing the fits for $ f_4$, we see that the variances of the KENV(G) and KRR(G) estimators are closer than for the other components, and that KRR(G) even has slightly lower variance for some values of $x$.
	However, KENV(G) appears to have lower bias in this case, with the mean of the fitted KENV(G) curves cleaving more closely to the true regression function than the mean of the KRR(G) curves, particularly in places where the true function bends sharply.
	
	Figure \ref{fig:univariate fits} reveals that KENV(G)'s prediction performance gains relative to KRR(G) stem largely from reducing the variance.
	This finding aligns with the theoretical results in Proposition \ref{prop:in sample risk comparison} as well as our broader intuition about response envelopes---namely, that by separating the material and immaterial parts of $\yvec$ we can reduce the variance of our estimator.
	It is significant that we see this variance reduction under a nonparametric model \eqref{eqn:regression model}, as it demonstrates that by applying envelope techniques to nonparametric estimations we can substantially improve their prediction performance.

	\begin{figure}
		\centering
		\caption{Comparing KENV(G) and KRR(G) fits}
		\label{fig:univariate fits}
		\includegraphics[width=\columnwidth]{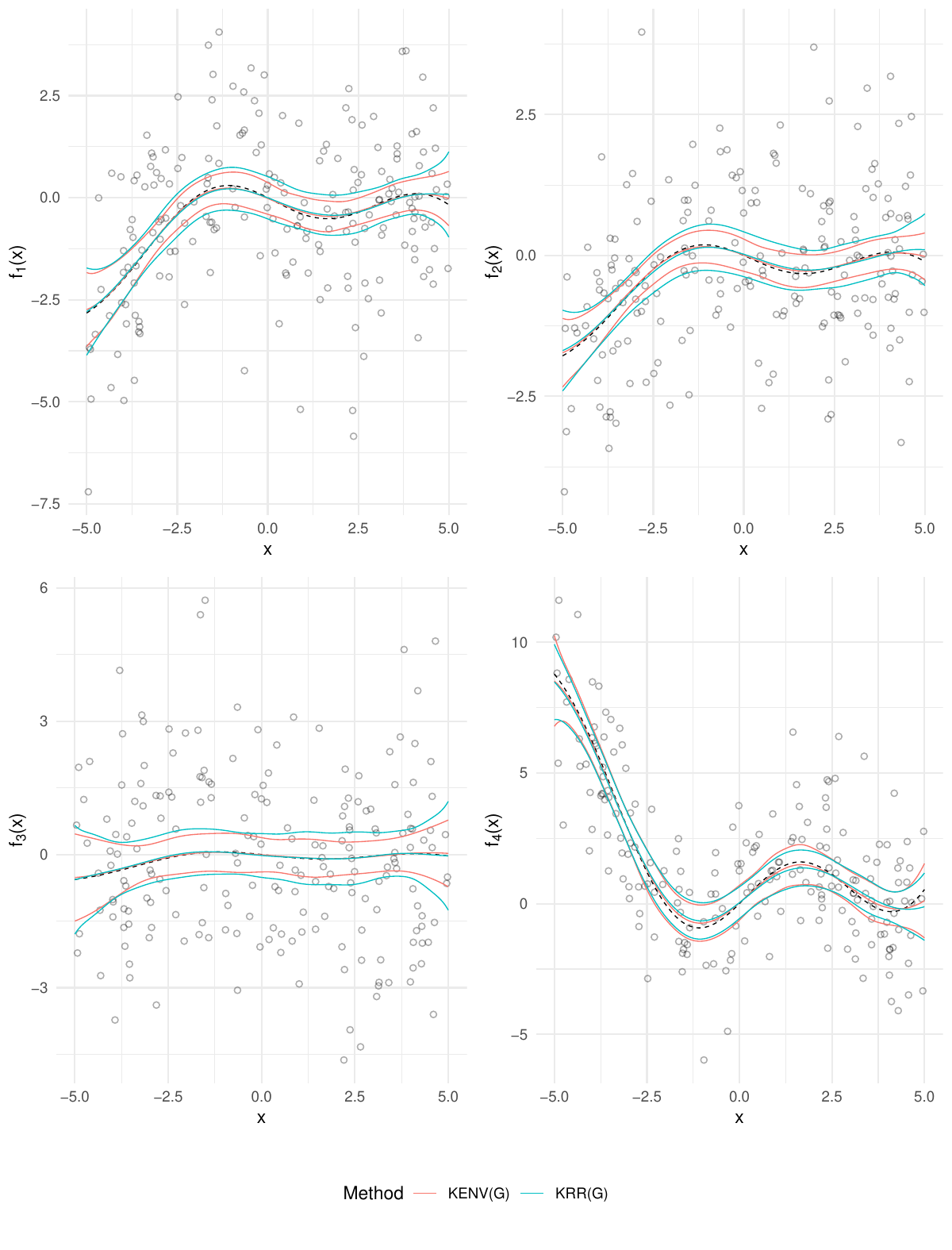}
	\end{figure}
		
	\section{Analysis of Chemometrics Data} \label{sec:real data}

	We further examine the empirical performance of KENV with a prediction comparison on the Chemometrics data of \citet{skagerberg_multivariate_1992}.
	The data were generated from a simulation of a reactor producing low-density polyethylene and contain $n = 56$ observations, $r = 6$ responses, and $p = 22$ predictors.
	The predictors include $20$ temperature measurements along the reactor, the wall temperature, and the solvent flow rate.
	The responses characterize the polymers produced by the reactor and include molecular weight, average molecular weight number, long chain branching,  short chain branching, the number of vinyl groups, and the number of vinylidene groups in the polymer chain.
	Finding that the responses were right skewed, we applied a log transformation and standardized them before fitting models.
	
	We consider the same methods as in Section \ref{sec:sims} for the comparison.
	We use $10$-fold CV to assess their prediction performance, computing the MSE and MAE for the predictions across the test folds and reporting their means.  
	We tune $u$, $\lambda$, and $\sigma$ using $10$-fold CV within each of the training folds.
	
	Table \ref{tbl:chemometrics data} reports the results from the comparison on the Chemometrics data.
	We see that KENV outperforms KRR with both the Gaussian and Laplacian kernels and that KENV(G) delivers far more accurate predictions than the SVM with a Gaussian kernel, demonstrating that the envelope structure of KENV leads to improved prediction performance in this real application.
	The chosen envelope dimension was $u = 3$ in $9$ of $10$ outer CV folds for KENV(G) and $8$ of $10$ folds for KENV(L), a substantial reduction from the original $6$ response components.
	Among the kernels, the Gaussian kernel outperforms the Laplacian kernel and the kernel methods dominate the linear methods, suggesting that the underlying relationship is not well-approximated by a linear model.
	Moreover, EENV delivers less accurate predictions than RR, again demonstrating its sensitivity to model misspecification and underscoring the need for nonparametric envelope methods.
	
	\spacingset{1.15}
	\begin{table}[t]
		\caption{Prediction comparison on Chemometrics data}
		\label{tbl:chemometrics data}
		\centering
		\begin{tabular}{llllllll}
			\toprule
			& KENV(G) & KRR(G) & KENV(L) & KRR(L) & EENV & RR & SVM\\
			\midrule
			\hspace{1em}MSE & \textbf{0.50} & 0.59 & 1.23 & 1.25 & 2.20 & 1.89 & 0.96\\
			\hspace{1em}MAE & \textbf{1.35} & 1.45 & 2.05 & 2.07 & 2.66 & 2.49 & 1.77
		\end{tabular}
	\end{table}
	\spacingset{1.7}
	
	\section{Discussion}\label{sec:discussion}
	In this paper, we have introduced the nonparametric response envelope model and the kernel-based KENV estimator for that model.
	Theoretically and empirically we have seen that KENV improves on the prediction performance of kernel ridge regression, decreasing the variance by estimating the material and immaterial parts of the response and using only the former to estimate the unknown regression function.
	At the same time, KENV extends existing response envelope methodology by allowing for a nonparametric relationship between the predictors and responses and by accommodating high-dimensional predictors.
	In empirical studies, we found that KENV consistently outperformed the linear enhanced response envelope estimator.
	In sum, KENV has several practical advantages: it is flexible, estimating a general nonparametric regression model; it directly estimates the regression functions and those estimates can be easily plotted and interpreted; it delivers superior prediction accuracy to alternative envelope and RKHS methods; it is able to handle both low and high-dimensional predictors; and it can be computed efficiently.
	
	There are a few natural directions for future work on nonparametric and kernel envelope methods.
	\citet{cook_reduced_rank_2015} proposed a hybrid reduced-rank and envelope regression model, which they showed to be at least as efficient as reduced-rank regression and response envelopes.
	We could similarly integrate KENV with kernel reduced-rank regression \citep{mukherjee_reduced_2011} to further improve on KENV.
	Drawing on \citet{allen_automatic_2013} and \citet{chen_double_2018}'s work on sparse kernel methods, we could also develop a sparse variant of KENV for selecting important features in high dimensions.
	Lastly, we could relax the assumptions in the theoretical study to allow for heteroscedasticity.

	\bibliographystyle{apalike}
	\bibliography{envlp}
	
	\newpage
	
	\appendix
	
	\setcounter{section}{0}
	\setcounter{equation}{0}

	\def\theequation{S\arabic{equation}}
	\def\thesection{S\arabic{section}}
	
	\section{Theoretical Proofs}
	
	\begin{proof}[Proof of Proposition \ref{prop:condition equivalence}]
		When \ref{condition a} holds, we see that $\bm Q_{\Eset} \fvec(\xvec) =\bm 0$ for all $\xvec \in \dom(\fvec)$. 
		As such, $\bm Q_{\Eset}\yvec = \bm Q_{\Eset} \fvec(\xvec) + \bm Q_{\Eset} \epvec = \bm Q_{\Eset} \epvec$, which implies \ref{condition i}. 
		Moving in the other direction, we note that $\fvec(\xvec) = \bm P_{\Eset}\fvec(\xvec) + \bm Q_{\Eset}\fvec(\xvec)$. Condition \ref{condition i} implies $\bm Q_{\Eset} \fvec(\xvec) = \bm 0$ for all $\xvec \in \dom(\fvec)$.
		As such, we have $\fvec(\xvec) = \bm P_{\Eset}\fvec(\xvec)$, which implies \ref{condition a}.
		The equivalence between \ref{condition ii} and \ref{condition b} follows immediately from the properties of the multivariate normal distribution.
	\end{proof}
	
	\begin{proof}[Proof of Theorem \ref{thm:rep theorem}]
		Suppose $\gvec = (g_1, \ldots, g_u)^T$ is the maximizer of $L_2(\Eset, \gvec, \Omat) - \frac{\lambda}{2} \rho(\gvec, \Omat)$.
		Define the subspace $ \mathcal{S}_K = \spann\{ K(\cdot, \xvec_i): i \in [n] \}$.
		We can express each $g_j$ as $g_j = g_j^K + g_j^{\perp}$ where $g_j^K$ is the projection of $g_j$ onto $\mathcal{S}_K$ and $g_j^{\perp}$ is orthogonal to $\mathcal{S}_K$.
		By the reproducing property of $K$, we see that for all $i \in [n], j \in [u]$
		$ g_j(x_i) = \inprod{g_j}{ K(\cdot, x_i)}_{\RKHS} =\inprod{g_j^K + g_j^{\perp}}{ K(\cdot, x_i)}_{\RKHS} = g_j^K(x_i) $.
		By extension, we have that $L_2(\Eset, \gvec, \Omat) = L_2(\Eset, \gvec^K, \Omat)$.
		
		Moving to the penalty, we find
		$\rho(\gvec, \Omat) 
		= \sum_{i=1}^u \norm{ [\Omat^{-1/2}]_{i\cdot} \gvec  }_{\RKHS}^2
		= \sum_{i=1}^u \norm{ [\Omat^{-1/2}]_{i\cdot} (\gvec^K + \gvec^{\perp})  }_{\RKHS}^2
		= \sum_{i=1}^u \left(\norm{ [\Omat^{-1/2}]_{i\cdot} \gvec^K  }_{\RKHS}^2 + \norm{ [\Omat^{-1/2}]_{i\cdot}  \gvec^{\perp}  }_{\RKHS}^2 + 2\inprod{[\Omat^{-1/2}]_{i\cdot}\gvec^K}{[\Omat^{-1/2}]_{i\cdot}\gvec^{\perp}}_{\RKHS}\right) $.
		Since $\inprod{g_i^K}{g_j^{\perp}}_{\RKHS} = 0$ for all $i,j \in [u]$, it is straightforward to show $\inprod{[\Omat^{-1/2}]_{i\cdot}\gvec^K}{[\Omat^{-1/2}]_{i\cdot}\gvec^{\perp}}_{\RKHS} = 0$ for all $i\in [u]$ .
		Therefore $\rho(\gvec, \Omat) %
		= \rho(\gvec^K, \Omat) + \rho(\gvec^{\perp}, \Omat)$.
		
		Combining these results, we see that 
		$L_2(\Eset, \gvec, \Omat) - \frac{\lambda}{2} \rho(\gvec, \Omat) \leq L_2(\Eset, \gvec^K, \Omat) - \frac{\lambda}{2} \rho(\gvec^K, \Omat)$.
		As such, the $\gvec^k$ must be the maximizer, meaning that the solution satisfies $g_j \in \mathcal S_{K}$ for all $j \in [u]$.
	\end{proof}
	
	\begin{proof}[Proof of Proposition \ref{prop:in sample risk comparison}]
		The KRR estimator is given by
		$
		\hat \fvec^R(\Xmat) = \bm K(\bm K + \lambda \bm I)^{-1}\Ymat
		$.
		When $\Gmat$ is known, the KENV estimator is
		$
		\hat \fvec^E_{\Gmat}(\Xmat) = \hat \fvec^R(\Xmat) \Gmat \Gmat^T,
		$
		which we recognize as the projection of the KRR estimator onto $\envlp$.
		
		We can express \eqref{eqn:regression model} in matrix form as
		$ \bm Y = \fvec(\Xmat) + \bm E $
		where $\bm E = [\epvec_1 \cdots \epvec_n]^T$, with $\E [\bm E] = \bm 0$ and $\Var[ \vect(\epvec) | \Xmat ] = \bm \Sigma \otimes \bm I_r$.
		Under \eqref{eqn:regression model}, we find that the bias of the KRR estimator is
		\begin{align*}
			\E[ \hat \fvec^R(\Xmat) | \Xmat ] - \fvec(\Xmat)
			& = \bm K(\bm K + \lambda \bm I)^{-1}\fvec(\Xmat) - \fvec(\Xmat) \\
			& = [\bm K(\bm K + \lambda \bm I)^{-1} - \bm I] \fvec(\Xmat)\\
			& = [\bm K(\bm K + \lambda \bm I)^{-1} - (\bm K + \lambda \bm I)(\bm K + \lambda \bm I)^{-1}] \fvec(\Xmat)\\
			& = -\lambda (\bm K + \lambda \bm I)^{-1} \fvec(\Xmat).
		\end{align*} 
		Under \eqref{eqn:envelope model}, we have that $\fvec(\Xmat) \in \envlp$. This implies that $\fvec(\Xmat)\Gmat \Gmat^T = \fvec(\Xmat)$.
		Using this fact, it is straightforward to show that KENV also has bias 
		$\E[ \hat \fvec^E_{\Gmat}(\Xmat) | \Xmat ] - \fvec(\Xmat) = -\lambda (\bm K + \lambda \bm I)^{-1} \fvec(\Xmat)$.
		
		We can express the KRR estimator's variance as
		\begin{align*}
			\Var[  \vect (\hat \fvec^R (\Xmat)) | \Xmat ]
			& = \Var[  \vect (\bm K(\bm K + \lambda \bm I)^{-1}\bm E) | \Xmat ] \\
			& = \Var [ (\bm I_r \otimes \bm K(\bm K + \lambda \bm I)^{-1} ) \vect(\bm E) | \Xmat  ]\\
			& = (\bm I_r \otimes \bm K(\bm K + \lambda \bm I)^{-1} ) (\bm \Sigma \otimes \bm I_r) (\bm I_r \otimes \bm K(\bm K + \lambda \bm I)^{-1} )^T\\
			& = \bm \Sigma \otimes \bm K(\bm K + \lambda \bm I)^{-2} \bm K.
		\end{align*} 
		As such, the variance term from \eqref{eqn:bv decomp} for the KRR estimator is
		\begin{align*}
			\tr\{ \Var[ \vect( \hat \fvec^R(\Xmat)) |\Xmat ] \} 
			& = \tr\{ \bm \Sigma \} \tr\{ \bm K(\bm K + \lambda \bm I)^{-2} \bm K\}\\
			& = \tr\{  \Gmat \Omat \Gmat^T + \Gmat_0 \Omat_0 \Gmat_0^T \}\tr\{ \bm K(\bm K + \lambda \bm I)^{-2} \bm K\}\\
			& = ( \tr\{ \Omat \} + \tr\{ \Omat_0 \} )\tr\{ \bm K(\bm K + \lambda \bm I)^{-2} \bm K\}
		\end{align*}
		where the last line follows from the linearity and cyclic properties of the trace.
		
		A similar derivation for the KENV estimator yields
		\begin{align*}
			\Var[  \vect (\hat \fvec^E_{\Gmat} (\Xmat)) | \Xmat ]
			& = \Var[  \vect (\bm K(\bm K + \lambda \bm I)^{-1}\bm E \Gmat \Gmat^T) | \Xmat ] \\
			& = \Var [ (\Gmat \Gmat^T \otimes \bm K(\bm K + \lambda \bm I)^{-1} ) \vect(\bm E) | \Xmat  ]\\
			& = \Gmat \Gmat^T \bm \Sigma \Gmat \Gmat^T \otimes \bm K(\bm K + \lambda \bm I)^{-2} \bm K\\
			& = \Gmat \Gmat^T (\Gmat \Omat \Gmat^T + \Gmat_0 \Omat_0 \Gmat_0^T ) \Gmat \Gmat^T \otimes \bm K(\bm K + \lambda \bm I)^{-2} \bm K \\
			& = \Gmat \Omat \Gmat^T \otimes \bm K(\bm K + \lambda \bm I)^{-2} \bm K.
		\end{align*} 
		Consequently, the variance term for the KENV estimator is
		$$
		\tr\{ \Var[ \vect( \hat \fvec^E_{\Gmat}(\Xmat)) |\Xmat ] \}  
		= \tr\{  \Gmat \Omat \Gmat^T \}\tr\{ \bm K(\bm K + \lambda \bm I)^{-2} \bm K\}
		= \tr\{  \Omat \}\tr\{ \bm K(\bm K + \lambda \bm I)^{-2} \bm K\}.
		$$	
		As such, we have the following relation between the in-sample risks for KENV and KRR
		$$
		R( \hat \fvec^E_{\Gmat}(\Xmat) | \Xmat ) 
		= R( \hat \fvec^R (\Xmat) | \Xmat ) - \tr\{  \Omat_0 \}\tr\{ \bm K(\bm K + \lambda \bm I)^{-2} \bm K\} 
		\leq R( \hat \fvec^R (\Xmat) | \Xmat )
		$$
		where the inequality is strict if $\tr\{  \Omat_0 \} > 0$.		
	\end{proof}
	
	\begin{proof}[Proof of Theorem \ref{thm:risk consistency}]
		Because $\Gmat$ and $\Omat$ are known, we can rewrite the KENV estimation problem as
		\begin{align}
			\argmin_{\gvec} \frac{1}{2}  \sum_{i=1}^n \norm{ \Omat^{-1/2}\Gmat^T\yvec_i - \Omat^{-1/2}\gvec(\xvec_i) }_2^2 + \frac{\lambda_n}{2} \rho (\Omat^{-1/2} \gvec) \label{eqn:KENV objective simplified}.
		\end{align}
		We define the standardized reduced responses to be $\bm z_i = \Omat^{-1/2}\Gmat^T\yvec_i$ for $i \in [n]$ and define the transformed regression function to be $\hvec = \Omat^{-1/2}\gvec$.
		Because $\Omat$ is positive definite, it is non-singular and the mapping $\hvec = \Omat^{-1/2}\gvec$ is bijective.
		Moreover, since $g_j \in \RKHS$ for $j \in [u]$, it also holds that  $h_j \in \RKHS$ for $j \in [u]$ since $\RKHS$ is closed under linear combinations.
		Using this new notation, we can further reduce \eqref{eqn:KENV objective simplified} to
		\begin{equation}
			\argmin_{\hvec} \sum_{j = 1}^u \left\{  \frac{1}{2}  \sum_{i=1}^n (z_{ij} - \hvec_j(\xvec_i) )^2 + \frac{\lambda_n}{2} \norm{h_j}^2_{\RKHS} \right\}, \label{eqn:KENV in terms of h}
		\end{equation}
		with a one-to-one correspondence between the solutions of \eqref{eqn:KENV objective simplified} and \eqref{eqn:KENV in terms of h}.
		We can solve \eqref{eqn:KENV in terms of h} via $u$ separate minimizations
		\begin{equation}
			\argmin_{h_j} \frac{1}{2}  \sum_{i=1}^n (z_{ij} - h_j(\xvec_i) )^2 + \frac{\lambda_n}{2} \norm{h_j}^2_{\RKHS} \text { \; for \; } j  \in [u]. \label{eqn:u separate minimizations}
		\end{equation}
		
		Our strategy will be to establish the risk consistency of the solutions to the $u$ problems in \eqref{eqn:u separate minimizations}, then combine these results to show that the solution to \eqref{eqn:KENV in terms of h} is also risk consistent.
		For ease of presentation, suppose $(\bm z, \xvec)$ come from the same distribution as $\{(\bm z_i, \xvec_i)\}_{i=1}^n$.
		We see that $\E [z_j^2] < \infty$ for $j \in [u]$ since $\Gmat$ and $\Omat$ are fixed and $\E [y_j^2] < \infty$ for $j \in [r]$ .
		The risk for the $j$th problem is 
		$
		R_j(h_j) = \E[ (z_j - h_j(\xvec))^2 ]
		$
		and the optimal risk is 
		$
		R_j^* = \inf_{h_j} R_j(h_j)
		$.
		Let $\hat h_j$ denote the solution to the $j$th problem in \eqref{eqn:u separate minimizations}.
		Theorem 12 of \cite{christmann_consistency_2007} establishes that $R_j(\hat h_j) \to _p R_j^*$ as $n \to \infty$ for $j \in [u]$ if $\lambda_n/n \to 0$ and $\lambda_n/n^{3/4} \to \infty$. %
		
		Let $\tau > 0$. From our $u$ risk consistency results, we know that $\forall j \in [u]$ $\exists N_j > 0$ such that
		$
		P(R_j(h_j) - R_j^* > \tau/u) < \tau/u \text{ for all } n > N_j
		$
		(note that $R_j(h_j) > R_j^*$ by definition, so we drop the absolute value).
		We see that the risk for the multivariate problem is
		$
		R(\hvec) = \E[ \sum_{j = 1}^{u} (z_j - h_j(\xvec))^2  ] = \sum_{j = 1}^{u} R_j(h_j)
		$
		and that the optimal risk is $R^* = \sum_{j = 1}^{u} R_j^*$.
		Define $N = \max_{j \in [u]} N_j$.
		Applying the union bound, we find that
		\begin{align*}
			P(R(\hat \hvec) - R^* > \tau  )
			& = P\left(\sum_{j = 1}^{u} \left\{R_j(h_j) - R_j^*\right\} > \tau  \right) 
			\leq P\left(\bigcup_{j = 1}^{u} \left\{R_j(h_j) - R_j^*\right\} > \tau/u \right) \\
			& \leq \sum_{j = 1}^{u} P\left( R_j(h_j) - R_j^* > \tau/u \right)
			< \tau
		\end{align*}
		for all $n > N$.
		Therefore $ R(\hat \hvec) \to_p R^*$ as $n \to \infty$.
	\end{proof}
	
\end{document}